\newcommand{\op}{\operatorname}
\newcommand{\arity}{\operatorname{ar}}
\newcommand{\FO}{\ensuremath{\operatorname{FO}}\xspace}
\newcommand{\MSO}{\ensuremath{\operatorname{MSO}}\xspace}
\newcommand{\CMSO}[1][]{\ensuremath{\operatorname{CMSO}}\xspace}
\newcommand{\lvlParity}{\delta}
\newcommand{\colour}{\mathsf{col}}
\newcommand{\repSet}{\mathsf{REP}}
\newcommand{\rep}{\mathsf{rep}}
\newcommand{\leader}{\mathsf{leader}}
\newcommand{\single}{\mathsf{single}}
\newcommand{\fully}{\mathsf{fully}}
\newcommand{\missed}{\mathsf{missed}}
\newcommand{\parent}{\mathsf{PARENT}}
\newcommand{\child}{\mathsf{CHILD}}
\newcommand{\ancestor}{\mathsf{ANC}}
\newcommand{\descendant}{\mathsf{DESC}}
\newcommand{\childT}{\mathsf{child}}
\newcommand{\ancestorT}{\mathsf{anc}}
\newcommand{\descendantT}{\mathsf{desc}}
\newcommand{\rootT}{\mathsf{root}}
\newcommand{\rootNode}{\mathsf{ROOT}}
\newcommand{\leafT}{\mathsf{leaf}}
\newcommand{\leaf}{\mathsf{LEAF}}
\newcommand{\SET}{\mathsf{SET}}
\newcommand{\BIPART}{\mathsf{BIPART}}
\newcommand{\copyV}{\mathsf{copy}}
\newcommand{\colorV}{\mathsf{colour}}
\newcommand{\realization}{\mathsf{real}}
\newcommand{\cT}{\mathcal{T}}
\newcommand{\cL}{\mathcal{L}}
\newcommand{\LHS}{\mathsf{Left}}
\newcommand{\RHS}{\mathsf{Right}}
\newcommand{\LL}{\mathsf{LL}}
\newcommand{\LR}{\mathsf{LR}}
\newcommand{\RL}{\mathsf{RL}}
\newcommand{\RR}{\mathsf{RR}}
\newcommand\DEGENERATE{\ensuremath{\mathsf{degenerate}}\xspace}
\newcommand\PRIME{\ensuremath{\mathsf{prime}}\xspace}
\newcommand\LINEAR{\ensuremath{\mathsf{linear}}\xspace}
\title{The role of counting quantifiers in laminar set systems} 
\author{Rutger Campbell}{University of Waterloo, Waterloo, Canada}{rtrcampb@uwaterloo.ca}{}{}
\author{Noleen K\"{o}hler}{University of Leeds, Leeds, UK}{N.Koehler@leeds.ac.uk}{https://orcid.org/0000-0002-1023-6530}{}
\authorrunning{R. Campbell and N. K\"{o}hler} 
\keywords{\MSO-transductions, simulating counting quantifiers, laminar set systems, graph decompositions} 
\begin{document}

\maketitle

\begin{abstract}
Laminar set systems consist of non-crossing subsets of a universe with set inclusion essentially corresponding to the descendant relationship of a tree, the so-called laminar tree. Laminar set systems lie at the core of many graph decompositions such as modular decompositions, split decompositions, and bi-join decompositions.
We show that from a laminar set system we can obtain the corresponding laminar tree by means of a monadic second order logic (\MSO) transduction.
This resolves an open question originally asked by Courcelle and is a satisfying resolution as \MSO is the natural logic for set systems and is sufficient to define the property ``laminar''.
Using results from Campbell et al. [STACS 2025], we can now obtain transductions for obtaining modular decompositions, co-trees, split decompositions and bi-join decompositions using \MSO instead of \CMSO.
We further gain some insight into the expressive power of counting quantifiers and provide some results towards determining when counting quantifiers can be simulated in \MSO in laminar set systems and when they cannot.

\end{abstract}
\newpage

\section{Introduction}
\label{sec:intro}

A transduction transforms relational structures over an input vocabulary $\Sigma_1$ into 
relation structures over an output vocabulary $\Sigma_2$ in such a way that each $\Sigma_2$-definable property can be ``translated'' back to a $\Sigma_1$-definable preimage.
For example, given a rooted tree, we can define a predicate $\Phi_{\SET}(S)$
    that is true for a set $S$ when it consists of all the leaves that are the descendents of a node of the tree. We can use this to construct a transduction that takes a tree and outputs a set system on the leaves of a tree; the predicate $\Phi_{\SET}(S)$ is true when $S$ is in this ``laminar'' set system, which we define below.

\begin{theorem}[{Backwards Translation Theorem~\cite[Theorem~1.40]{CE09}}]\label{thm:backwards_translation}
    Let $\tau$ be a \MSO-transduction with input vocabulary $\Sigma_1$ and output vocabulary $\Sigma_2$.
    If $\phi_2$ is an \MSO-sentence over $\Sigma_2$, then there is a \MSO-sentence $\phi_1$ over $\Sigma_1$ so that, 
    the sentence $\phi_1$ holds for precisely the structures $\mathbb{A}$ over $\Sigma_1$ for which $\tau(\mathbb{A})$ contains a structure satisfying $\phi_2$.
\end{theorem}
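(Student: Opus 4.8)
The plan is to construct $\phi_1$ from $\phi_2$ by a purely syntactic ``pull-back'' along the transduction $\tau$, and then to establish correctness by structural induction on $\phi_2$. Recall that an \MSO-transduction $\tau$ is specified by a fixed number $k$ of copies, a tuple of set parameters $\bar{W} = (W_1, \dots, W_p)$, a domain formula $\delta_i(x,\bar{W})$ over $\Sigma_1$ for each copy $i \in \{1, \dots, k\}$, and, for each relation symbol $R \in \Sigma_2$ of arity $r$, a family of formulas $\theta_R^{\bar{\imath}}(\bar{x},\bar{W})$ over $\Sigma_1$ indexed by the copy-indices $\bar{\imath} \in \{1,\dots,k\}^r$ of the argument tuple. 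On input $\mathbb{A}$ together with a choice of parameters $\bar{W}$, the structure $\tau(\mathbb{A})$ has universe $\{(i,v) : i \le k,\ \mathbb{A} \models \delta_i(v,\bar{W})\}$, and a relation $R$ holds of a tuple of copies exactly when the matching $\theta_R^{\bar{\imath}}$ holds of their $v$-coordinates in $\mathbb{A}$.

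The core of the argument is an operation $\psi \mapsto \psi^\ast$ sending each \MSO-formula $\psi$ over $\Sigma_2$ to an \MSO-formula over $\Sigma_1$ whose free variables include the parameters $\bar{W}$. Since $k$ is a fixed constant, I would encode a single output element $(i,v)$ by an input element $v$ together with an explicit copy index $i$, and an output set by a $k$-tuple of input sets $(X^1, \dots, X^k)$, where $X^i$ collects the $v$-coordinates of the members of that set lying in copy $i$. Under this encoding I would define $\psi^\ast$ by induction: an atom $R(y_1, \dots, y_r)$, with each $y_j$ carrying a copy index $i_j$, is replaced by $\theta_R^{(i_1,\dots,i_r)}$ applied to the corresponding $v$-coordinates; a membership atom $y \in X$ with $y$ in copy $i$ is replaced by ``$v \in X^i$''; Boolean connectives are passed through unchanged; each first-order quantifier $\exists y\,\psi$ is replaced by a finite disjunction over the copy $i$ of $y$, inside which an existential over an input element is relativized to $\delta_i$; and each set quantifier $\exists X\,\psi$ is replaced by existential quantification over the $k$ input sets $X^1, \dots, X^k$, each relativized so that its members satisfy the corresponding $\delta_i$.

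To obtain the sentence $\phi_1$, I would take the translation $\phi_2^\ast$ of the sentence $\phi_2$, which by construction has the parameters $\bar{W}$ as its only free variables, and existentially quantify over them, setting $\phi_1 := \exists W_1 \cdots \exists W_p\, \phi_2^\ast$. A straightforward induction then shows that, for every fixed choice of $\bar{W}$, the structure $\tau(\mathbb{A})$ produced with these parameters satisfies $\phi_2$ if and only if $\mathbb{A} \models \phi_2^\ast$ under the corresponding assignment to $\bar{W}$. Quantifying out the parameters therefore yields exactly the structures $\mathbb{A}$ for which some output in the set $\tau(\mathbb{A})$ satisfies $\phi_2$, which is precisely the claimed condition.

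The main obstacle is bookkeeping rather than conceptual insight: one must pin down a clean encoding of output variables as input variables decorated with copy indices and then verify that each inductive step respects it. The two delicate points are that relativizing quantifiers to the domain formulas $\delta_i$ captures the universe of $\tau(\mathbb{A})$ exactly, and that splitting each monadic variable into the $k$ input variables $X^1, \dots, X^k$ faithfully represents every subset of the copied output universe and nothing more. Once the encoding is fixed, the correctness induction is routine, with the quantifier cases being the only ones that require genuine care; any additional features of the transduction formalism, such as an output defined on several copies or side conditions on the parameters, are absorbed into the same outer existential prefix and the same relativization without introducing new ideas.
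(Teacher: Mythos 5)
The paper does not actually prove this statement; it imports it from Courcelle and Engelfriet, and your argument is, in substance, the standard proof from that source: encode each output element as an input element tagged with one of $k$ copy indices, encode each output set as a $k$-tuple of input sets, pull $\phi_2$ back by induction on its structure, and existentially quantify the colouring parameters. Within the formalism you chose, that induction is sound (up to small omissions, e.g.\ equality atoms between variables carrying different copy indices, which translate to ``false'' unless the indices agree).

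There are, however, two concrete mismatches with this paper's setting that your proof must bridge before it establishes the theorem as stated here. First, the paper defines an \MSO-transduction not as a single parameterized definition scheme but as an arbitrary finite composition of four kinds of \emph{atomic} transductions (colouring, copying, filtering, interpretation). Your proof applies verbatim only to transductions already presented in definition-scheme form; to cover the paper's definition you must either prove that every such composition can be normalized into that form (a nontrivial bookkeeping theorem, essentially equivalent in content to the statement being proved), or --- much simpler given the paper's definition --- prove backwards translation separately for each atomic kind and then observe that the property is closed under composition: if $\tau=\tau_2\circ\tau_1$ and both $\tau_1,\tau_2$ admit backwards translation, translate $\phi_2$ first through $\tau_2$ and then through $\tau_1$. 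The compositional route avoids normal forms entirely and also disposes of filtering (take $\phi_1:=\phi_2\wedge\chi$), which your scheme does not represent directly. Second, the paper works with \emph{extended} relational structures whose vocabularies may contain set predicates; this matters for how the theorem is used, since e.g.\ the output vocabulary $\{\SET\}$ of the transduction in Lemma~\ref{lem:backwardsDirection} contains a set predicate of arity $1$. Your induction translates relational atoms $R(y_1,\dots,y_r)$ over element variables and membership atoms $y\in X$, but not atoms $Q(X_1,\dots,X_r)$ whose arguments are set variables. Under your encoding such an atom must be translated by a formula in $k\cdot r$ set variables (one per copy per argument), so the indexing of the formulas $\theta_R^{\bar{\imath}}$ by copy-indices of the arguments does not apply to this case; it has to be added to the induction as a separate clause.
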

This is incredibly useful as it means derived structure can be used to define properties.
Of particular significance is derived tree-structure,
due to connections to tree-automata.
Thatcher and Wright showed that a property of labelled rooted trees of bounded degree is \MSO-definable if and only if it is recognizable by a tree-automaton.
So if we have a class of objects from which we can transduce a bounded tree description, from whence we can, in turn, transduce the original object,
then \MSO-definability of the original object corresponds to recognizability with a tree-automaton. 
Most prominently, for graphs of bounded treewidth, \MSO-definability is equivalent to tree-automaton recognizability,  where the forwards implication is Courcelle's Theorem \cite{TMSOLOG1} and the backward direction was finally shown by Bojańczyk and Pilipczuk in 2016 \cite{BojanczykP16}. Using the powerful method of Simon's factorization \cite{DBLP:journals/tcs/Simon90}, results of Bojańczyk et al. \cite{BojanczykGP21} show that definability equal recognizability in graphs of bounded linear clique-width.
This was recently generalized to show definability equal recognizability in classes of 
represented matroids~\cite{Boj23} and more generally representable matroids~\cite{CampbellGKKO25} over finite fields of bounded linear branch-width.
In pursuit of understanding when definability equals recognizability, Courcelle studied how to transduce tree-structured decompositions such as modular decompositions and split decompositions \cite{TMSOLOF5,Courcelle99,Courcelle06,Courcelle13}. In this series of work, Courcelle uses order invariant \MSO, which allows the use of a global linear order of the vertices of the graph. This logic is strictly more expressive than \MSO with counting quantifiers, denoted \CMSO, which in turn is more expressive than \MSO \cite{GanzowR08}.
These transductions were recently improved to avoid the use of linear order, but still relied on counting quantifiers for the key step~\cite{CampbellGKKK25}.
This key step involves transducing a tree from a collection of sets and was independently described by Bojańczyk~\cite{Boj23}, also using counting quantifiers.
It was asked as an open question in \cite{Boj23,CampbellGKKK25,CourcelleT12} whether this transduction can be obtained using just \MSO.
\\

We focus on transducing a tree from the setting of set systems as it provides a general form of structure itself, to which many other settings have a transduction to.
A \emph{set system} is a pair $(U,\mathcal{F})$ consisting of a set $U$ and a family $\mathcal{F}$ of subsets of $U$. We model this over the vocabulary $\{\SET\}$, consisting of the unary set-predicate $\SET$, by taking $\{\SET\}$-structure $\mathbb{F}$ with universe $U_{\mathbb{F}}$ and taking $\SET_{\mathbb{F}}(X)$ to be true precisely when $X\in\mathcal{F}$.
We say a set system is \emph{laminar} when: $U\in\mathcal{F}$, for each $u\in U$ the singleton $\{u\}\in\mathcal{F}$, and for any $F_1,F_2\in\mathcal{F}$ the intersection $F_1\cap F_2$ is one of $\emptyset,F_1,F_2$. 
For the last condition, we say that $F_1$ and $F_2$ do not cross. 

Any laminar set system $(U,\mathcal{F})$ naturally corresponds to a tree $T$, called the \emph{laminar tree} of $(U,\mathcal{F})$,  in which each node of $T$ corresponds to a set in $\mathcal{F}$ and the descendant relation coincides with the subset relation.
In particular, $U\in\mathcal{F}$ corresponds to the root of $T$ while
the leaves of $T$ correspond to the singletons $\{u\}\in \mathcal{F}$. For ease of notation, we identify $u\in U$ with the leaf that corresponds to the singleton $\{u\}$.
Doing this, a set $F\in\mathcal{F}$ corresponds to the node $v_F$ of $T$ where the leaves that are descendants of $v_F$ are precisely the elements of $F$.
For an example of a laminar set system and its laminar tree see \cref{fig:laminarTree}.
\\

In this paper, we give an \MSO-transduction that, given any laminar set system $(U,\mathcal{F})$, outputs its laminar tree. 
Observe that the tree structure of the laminar tree is implicitly already given by the subset relationship.
But in the set system language we do not have any means of talking about the nodes of the laminar tree, and can only specify sets of leaves.
Indeed, the tricky part is to identify elements that can play the role of each node of the laminar tree $T$. We introduce a new technique for choosing a representative leaf for every node of a tree in such a way that each leaf is used only a bounded number of times which is the key step in our transduction.
Formally, our main theorem is the following:
\begin{theorem}\label{thm:main}
There is an \MSO-transduction $\tau$ from structures over the vocabulary $\{\SET\}$ of set systems, to structures over the vocabulary $\{\descendantT\}$ of rooted trees, such that if $(U,\mathcal{F})$ is a laminar set system, then every $\tau$-image of $(U,\mathcal{F})$ is isomorphic to the laminar tree of $(U,\mathcal{F})$ as a $\descendantT$-structure.
\end{theorem}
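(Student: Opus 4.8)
The plan is to exploit that, over the input set system, the entire combinatorial structure of the laminar tree is already \MSO-definable; the only genuine difficulty is the representational constraint of a transduction, namely that the output universe must be a definable subset of a bounded number of copies of the input universe $U$. Concretely, over $\{\SET\}$ one defines ``$X\subseteq Y$'', ``$X$ is a child of $Y$'' by $\SET(X)\wedge\SET(Y)\wedge X\subsetneq Y\wedge\neg\exists Z(\SET(Z)\wedge X\subsetneq Z\wedge Z\subsetneq Y)$, ``$X$ is a leaf'' by $\SET(X)\wedge\exists u\,\forall z(z\in X\leftrightarrow z=u)$, and ``$X$ is the root'' by $\forall z\,(z\in X)$; the target relation $\descendantT$ is simply $\subseteq$ restricted to members of $\mathcal{F}$. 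Thus if we could take the output universe to be $\mathcal{F}$ itself we would be done immediately. The work is therefore to realise each node (each $F\in\mathcal{F}$) by a single element of a bounded number of copies of $U$, and to recover $F$ from that element well enough to define $\descendantT$.

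First I would reduce the number of nodes to be realised. Since every $u\in U$ has $\{u\}\in\mathcal{F}$, the children of any $F\in\mathcal{F}$ partition $F$, so every non-singleton (internal) node has at least two children; hence there are at most $|U|-1$ internal nodes and at most $2|U|-1$ nodes in total. I place the leaves in one copy of $U$ via the identity $\{u\}\leftrightarrow u$, which is trivially definable. It then remains to realise the internal nodes injectively (or with bounded multiplicity) among a few further copies of $U$, i.e.\ to choose a \emph{representative leaf} $\rho(F)\in F$ for each internal node $F$. The essential point for definability is that realising $F$ by $\rho(F)$ forces us to define, in \MSO, a relation $\mathsf{rep}(u,F)$ expressing ``$u$ is the representative leaf of the set $F$'': given such a relation one defines the output universe as the leaves together with the elements $u$ for which $\exists F\,\mathsf{rep}(u,F)$ (placed in copies indexed by the bounded multiplicity), and defines $\descendantT$ between representatives by $\exists F\exists G(\SET(F)\wedge\SET(G)\wedge\mathsf{rep}(u,F)\wedge\mathsf{rep}(v,G)\wedge F\subseteq G)$, quantifying over the sets directly. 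Note also that the fibres $\{F:\rho(F)=u\}$ all lie on the ancestor chain of the leaf $u$ and so are linearly ordered by $\subseteq$; hence if each leaf is reused only a bounded number of times, the different nodes sharing a representative can be separated by their $\subseteq$-rank in an \MSO-definable way, which is exactly what lets a single leaf stand for several nodes.

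The hard part, and the heart of the new technique, is to \emph{construct} such an assignment so that $\mathsf{rep}$ is \MSO-definable from a bounded number of guessed monadic parameters (a colouring of $U$). The subtlety is that $\rho$ itself is a relation between elements and \emph{sets}, not a monadic object, so it cannot simply be guessed as a parameter; it must be defined from a leaf-colouring. A naive rule such as ``$\rho(F)=u$ when $F$ is the $\subseteq$-largest member of $\mathcal{F}$ containing $u$ as its unique marked leaf'' is \MSO-definable but only ever captures an antichain of nodes per colour, so covering a long chain of internal nodes (which a caterpillar-shaped laminar tree contains) would need an unbounded number of colours. The construction must therefore allow one colour class to be responsible for many nodes along a single chain while keeping each leaf's total reuse bounded. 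I expect to obtain this recursively, using the at-least-two-children property to always reserve a fresh descendant leaf for the node currently being charged and to amortise reuse down the tree, and then to record just enough of this choice in a bounded colouring that $\mathsf{rep}(u,F)$ becomes a local, \MSO-checkable condition. An alternative recovery that I would keep in reserve is to encode $F$ by a pair of leaves lying in two different children of $F$, since then $F$ is recoverable as the $\subseteq$-minimal set of $\mathcal{F}$ containing both, a purely \MSO-definable least common ancestor.

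Finally, I would package this as a transduction: the guessed colouring is validated by an \MSO domain condition asserting that it induces a covering, bounded-multiplicity assignment (so that faulty guesses produce no output), and for a valid guess the defined universe and $\descendantT$ relation reproduce exactly the laminar tree. The overhead of the construction---the number of copies---is precisely the reuse bound, which is a fixed constant. I anticipate that essentially all of the real difficulty is concentrated in the colouring lemma of the previous paragraph; once $\mathsf{rep}$ is in hand, both the universe and the descendant relation are routine \MSO formulas over the set system.
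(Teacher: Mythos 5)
There is a genuine gap. You have correctly reduced the theorem to a single lemma---from a bounded, guessed leaf-colouring, define in \MSO a relation $\mathsf{rep}(u,F)$ assigning to each internal node $F$ a representative leaf with bounded reuse---and you have correctly diagnosed why the naive rule fails (one colour can only serve an antichain of nodes, while a caterpillar-shaped laminar tree has long chains). But you then leave exactly this lemma unproved: ``I expect to obtain this recursively \dots and to record just enough of this choice in a bounded colouring'' is a statement of the problem, not a construction. Nothing in your sketch says \emph{what} is recorded in the colouring or how an \MSO formula would determine, among the marked leaves of a single colour class, which ones jointly encode which internal node---and that question is where the entire content of the theorem lies. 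Your fallback (encode $F$ by a pair of leaves in two distinct children of $F$, recovered as a least common ancestor) founders on the very obstacle you identified for $\rho$: the pairing is a binary object, not a monadic one, so it cannot be guessed as a colouring and merely restates the problem of making the pairing \MSO-recoverable.

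For comparison, the paper resolves this as follows. Inner nodes are labelled by depth modulo $4$ and partitioned into $16$ \emph{thin} sets (\cref{lem:partitionThinSets}), where thinness demands that each node of the set, and also its parent, has a sibling avoiding the set. For each node $s$ of a thin set one grows a \emph{representative tree} rooted at $s$ that alternates by depth parity: at depths congruent to that of $s$ it keeps exactly one child, at the other depths it keeps all children; thinness is what allows these trees to be chosen pairwise disjoint (\cref{lem:disjointRepSets}), and their leaf sets $A_s$ are the representative sets. The decodability then comes from a uniqueness lemma (\cref{lem:uniqueRepSets}): given only the union $A=\bigcup_s A_s$ as one colour and one marked ``leader'' per $A_s$ as a second colour, a leaf set $B$ equals $A_s$ if and only if $B\subseteq A$, $s$ is the tip of $B$, the nodes below $s$ alternate between fully-branched and single-branched by $B$, and $B$ is \emph{maximal} with these properties. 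Note that this characterization is not local: the maximality condition (v) is essential, and it is precisely the alternating branching structure together with this maximality that lets a single colour class serve arbitrarily long chains of internal nodes while each leaf is reused at most once per thin set. This construction---thin partition, alternating representative trees, and the maximality-based \MSO characterization---is the missing core of your proposal; the surrounding transduction mechanics (colouring, filtering a domain condition, $17$ copies, interpretation via inclusion of the encoded sets) you describe essentially as the paper does.
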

\begin{figure}
    \includegraphics{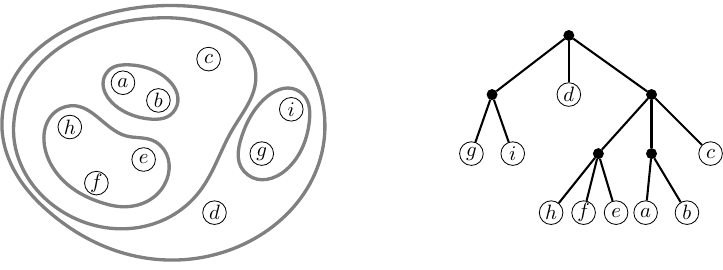}%
    \caption{%
        An example of a laminar set system and its laminar tree. %
    }%
    \label{fig:laminarTree}
\end{figure}

We present two consequences of this theorem.
Firstly,
we gain some insights into when \CMSO for laminar set systems is strictly more expressive than \MSO and when they are equivalent.

\begin{theorem}\label{bounded_d_counting}
    Let $k$ be a positive integer.
    Let $\mathcal{L}_k$ be the collection of laminar set systems whose laminar trees have down-degree at most $k$.
    Then there is a unary set-predicate $\op{EVEN}_k$, that is $\{\SET\}$-definable over \MSO, where for any set system $(U,\mathcal{F})\in\mathcal{L}_k$ and any $X\subseteq U$, we have $\op{EVEN}_k(X)$ as true if $|X|$ is even.
\end{theorem}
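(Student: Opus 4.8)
The plan is to combine the main transduction (Theorem~\ref{thm:main}) with a bottom-up parity computation on the resulting tree, whose correctness crucially exploits the bounded down-degree. First I would apply Theorem~\ref{thm:main} to obtain, for every input $(U,\mathcal{F})\in\mathcal{L}_k$, the \MSO-transduction $\tau$ producing its laminar tree $T$ over the vocabulary $\{\descendantT\}$. Recall that under the identification described in the introduction, the leaves of $T$ are exactly the elements of $U$, so a set $X\subseteq U$ is precisely a set of leaves of $T$ and $|X|$ is the number of selected leaves.

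On the tree side, I would express ``the number of selected leaves is even'' by an \MSO-formula $\psi(X)$ with a free set variable $X$ ranging over leaves. The idea is a standard bottom-up parity automaton, which has finitely many states and is therefore \MSO-expressible on trees: existentially guess a set $P\subseteq V(T)$, intended to be the set of nodes $v$ below which an odd number of leaves of $X$ lie, and verify that $P$ is locally consistent. Concretely, $\psi(X):=\exists P\,\bigl(\mathsf{valid}(P,X)\wedge \rootT\notin P\bigr)$, where $\mathsf{valid}(P,X)$ asserts that (i) every leaf $\ell$ satisfies $\ell\in P\leftrightarrow \ell\in X$, and (ii) every internal node $v$ satisfies $v\in P$ if and only if an odd number of the children (immediate descendants) of $v$ belong to $P$. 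Since these conditions determine $P$ uniquely from $X$, the formula $\psi(X)$ holds exactly when the root has even parity, that is, when $|X|$ is even.

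The decisive use of the hypothesis $(U,\mathcal{F})\in\mathcal{L}_k$ appears in condition (ii): because every node of $T$ has at most $k$ children, the predicate ``an odd number of the children of $v$ belong to $P$'' is a finite disjunction, over odd $j\le k$, of the \FO-expressible statement ``$v$ has exactly $j$ children lying in $P$'', and hence is \FO-definable over the child relation derived from $\descendantT$. This is exactly where bounded down-degree is essential: without it, parity of an unbounded set of children cannot be captured, which is precisely the obstruction that makes counting genuinely unavailable in \MSO in general, and thus explains why $\mathcal{L}_k$ rather than all laminar set systems is the right setting here.

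Finally, I would transport $\psi$ back to the vocabulary $\{\SET\}$. Applying the Backwards Translation Theorem (Theorem~\ref{thm:backwards_translation}), in its standard extension to formulas with free set variables, to the transduction $\tau$ and the formula $\psi(X)$ yields an \MSO-formula $\op{EVEN}_k(X)$ over $\{\SET\}$; the free variable is pulled back along the domain encoding of $\tau$, and the identification of the leaves of $T$ with the elements of $U$ guarantees that $X\subseteq U$ on the input side corresponds exactly to the set of selected leaves on the output side. Since for laminar inputs the image of $\tau$ is the single structure $T$, ``contains a structure satisfying $\psi$'' collapses to ``$T$ satisfies $\psi$'', so $\op{EVEN}_k(X)$ holds if and only if $|X|$ is even on every $(U,\mathcal{F})\in\mathcal{L}_k$. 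I expect the main technical care to lie in this last step, namely lining up the free-variable version of backwards translation with the leaf-to-element identification so that the pulled-back variable really ranges over the singleton-leaves of $T$ and nothing else.
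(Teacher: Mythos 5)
Your proposal is correct and takes essentially the same route as the paper: the paper likewise reduces the statement, via the transduction of Theorem~\ref{thm:main} and the Backwards Translation Theorem (Theorem~\ref{thm:backwards_translation}), to expressing parity of a set of leaves on the laminar tree, and then proves this (Theorem~\ref{thm:simulateCounting}) with exactly your bottom-up argument --- existentially guessing the set of nodes with an odd number of selected leaf-descendants and checking local consistency, where bounded down-degree makes ``an odd number of children in the guessed set'' a finite disjunction. Your $P$ is the paper's $S$; the approaches coincide in every essential step.
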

By adapting a known result of non-definability in \MSO,
we give the following partial converse to Theorem~\ref{bounded_d_counting}.
\begin{theorem}\label{unbounded_stars}
    If $\cL$ contains laminar set systems whose laminar trees are stars of unbounded degree, then we cannot $\{\SET\}$-define $\op{EVEN}$ for $\cL$ over \MSO.
\end{theorem}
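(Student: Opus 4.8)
The plan is to reduce the problem to the classical non-definability of parity in \MSO over unstructured sets, exploiting the fact that a star carries essentially no usable structure. First I would observe that if $(U,\mathcal{F})\in\cL$ has a star as its laminar tree, then $\mathcal{F}=\{U\}\cup\{\{u\}:u\in U\}$, so the predicate $\SET(Y)$ holds exactly when $Y$ is a singleton or $Y=U$. Both of these are definable in plain \MSO over the empty vocabulary, using only equality: ``$Y$ is a singleton'' is $\exists x\,\forall z\,(z\in Y\leftrightarrow z=x)$ and ``$Y=U$'' is $\forall z\,(z\in Y)$. Consequently, once we restrict attention to the star members of $\cL$, every \MSO-formula over $\{\SET\}$ is equivalent to an \MSO-formula over the empty vocabulary. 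It therefore suffices to show that $\op{EVEN}(X)$, with $X$ a free set variable, is not definable in \MSO over bare finite sets, where the free variable $X$ is read as a two-colouring of the universe into $X$ and $U\setminus X$.

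Next I would invoke the standard threshold behaviour of \MSO over coloured sets, which is the ``known result'' to be adapted. Because the automorphism group of a bare set with a distinguished subset is the full product of the symmetric groups on the two colour classes, an Ehrenfeucht--Fra\"{\i}ss\'e argument shows that for every quantifier rank $q$ there is a threshold $T=T(q)$ with the following property: for finite sets with distinguished subsets, $(U,X)$ and $(U',X')$ satisfy the same \MSO-sentences of rank $q$ whenever $\min(|X|,T)=\min(|X'|,T)$ and $\min(|U\setminus X|,T)=\min(|U'\setminus X'|,T)$. In the game, Duplicator maintains a partial map respecting the two colours together with all the sets chosen so far; since in each colour class the two sides are either matched exactly or both exceed the current threshold, there is always room to answer both point moves and set moves. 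The threshold grows with $q$ but is finite for each fixed $q$, and this is all that is needed.

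Finally I would derive the contradiction. Suppose towards a contradiction that some \MSO-formula $\phi(X)$ over $\{\SET\}$ defines $\op{EVEN}$ on $\cL$. By the first paragraph its restriction to star structures is equivalent to an \MSO-formula $\phi'(X)$ over the empty vocabulary; let $q$ be its quantifier rank and set $T=T(q)$. Since $\cL$ contains stars of unbounded degree, choose one whose universe $U$ satisfies $|U|\ge 2T+1$. Inside $U$ pick subsets $X,X'$ with $|X|=T$ and $|X'|=T+1$; then $\min(|X|,T)=\min(|X'|,T)=T$ and both complements have size at least $T$, so by the threshold lemma $(U,X)$ and $(U,X')$ agree on all \MSO-sentences of rank $q$, whence $\phi'(X)\leftrightarrow\phi'(X')$. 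But $|X|$ and $|X'|$ have opposite parities, so $\op{EVEN}(X)$ and $\op{EVEN}(X')$ differ, contradicting that $\phi'$ defines $\op{EVEN}$.

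The routine parts are the elimination of $\SET$ on stars and the choice of cardinalities at the end; the main obstacle is the Ehrenfeucht--Fra\"{\i}ss\'e / threshold argument of the second paragraph, namely verifying that set moves cannot be leveraged to detect parity over an unordered universe. This is precisely the place where the absence of an ordering---guaranteed by the star shape---is essential: detecting parity is a form of modular counting, available in \CMSO but not in plain \MSO once the universe is unordered, and the bounded number of set moves only refines the universe into boundedly many symmetric colour classes, each of which \MSO can count solely up to a threshold.
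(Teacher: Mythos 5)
Your proof is correct, but it takes a genuinely different route from the paper's. The paper handles this statement by transferring it to rooted trees (via \cref{thm:main}, \cref{lem:backwardsDirection} and the Backwards Translation Theorem it suffices to consider $\op{EVEN-LEAF}$ on trees), and then, for trees, it constructs an explicit overlay \MSO-transduction (copying, colouring, filtering, interpretation) that turns a structure over the empty vocabulary into a star whose leaves are exactly the original universe; instantiating the hypothetical parity predicate at $X=$ ``the set of all leaves'' yields a sentence, and backwards translation along this transduction contradicts the cited fact (Libkin, Proposition~7.12) that evenness of the universe is not \MSO-expressible over the empty vocabulary. You instead stay entirely in the $\{\SET\}$ vocabulary and observe that on stars the predicate $\SET$ is itself definable over the empty vocabulary ($\SET(Y)$ holds iff $Y$ is a singleton or $Y=U$), so it can be eliminated by syntactic substitution---no transduction, copying step to create a root, or backwards translation is needed---and you then prove, rather than cite, the underlying lower bound via an Ehrenfeucht--Fra\"{\i}ss\'e threshold argument, in the stronger free-variable form: two subsets $X,X'$ of the same large star with $|X|=T$ and $|X'|=T+1$ are rank-$q$ indistinguishable. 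Your route buys self-containedness and a slightly stronger conclusion (indistinguishability of subsets within a single structure), and it exploits that in the set-system modelling a star is literally a bare set with a definable predicate. The paper's route buys brevity by outsourcing the Ehrenfeucht--Fra\"{\i}ss\'e analysis to the literature---note that your second paragraph is only a sketch, and a complete proof must make the threshold shrink appropriately across rounds and match the sizes of all cells generated by previously chosen sets---and it yields the tree version of the theorem as a statement of independent interest; moreover, by instantiating the free variable at the set of all leaves, the paper only needs the sentence-level known result, whereas your argument requires the coloured (distinguished-subset) refinement of it.
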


Secondly, \cref{thm:main} has numerous corollaries in various graph theory decompositions. This work is a strengthening of Theorem~2 from~\cite{CampbellGKKK25}, which provides a transduction from laminar set systems to laminar trees using \MSO with modulo counting quantifiers. Our result shows that we can achieve the same transduction without the use of a parity predicate.
So by combining Theorem~\ref{thm:main} with the other transductions given in~\cite{CampbellGKKK25} (see \cite[Theorem~1]{CampbellGKKK25}), we obtain similar corollaries, but now without the use of counting predicates.

Weakly-partitive set systems are less restrictive than laminar set systems. In a weakly-partitive set system, two sets $F_1$ and $F_2$ in the family can cross but only if $F_1\cap F_2,F_1\setminus F_2,F_2\setminus F_1$ and $F_1\cup F_2$ are also in the family. For example, in graph theory the set of modules of a directed graph is weakly-partitive. The ``weakly-partitive tree'' of a weakly-partitive set system is obtained from the laminar tree $T$ of a particular laminar sub-set-system (the sets that do not cross any other),
by adding a node labelling $\lambda$ and a partial ordering $<$ in a way that fully capture the structure of the weakly-partitive set system (see \cite{chein1981partitive}). More specifically, $\lambda: V(T)\rightarrow \{\PRIME,\DEGENERATE,\LINEAR\}$ such that for every inner node $t\in V(T)$ with children $s_0,\dots, s_\ell$ the following holds:
\begin{itemize}
    \item if $\lambda(t)=\PRIME$ then the set of leaves below $t$ forms a set in the set system but for no set $I\subseteq [\ell]$ with $2\leq |I|< \ell$ the set of leaves below either of the $s_i$, $i\in I$ is in the set system.
    \item if $\lambda(t)=\DEGENERATE$ then the set of leaves below $t$ forms a set in the set system and for every set $I\subseteq [\ell]$ the set of leaves below either of the $s_i$, $i\in I$ is in the set system.
    \item if $\lambda(t)=\LINEAR$ then the set of leaves below $t$ forms a set in the set system. It further must hold that $s_0<\dots< s_\ell$ and  for a set $I\subseteq [\ell]$  the set of leaves below either of the $s_i$, $i\in I$ is in the set system exactly when $I$ forms a $<$-interval.
\end{itemize}
Furthermore, every set in the bipartitive set system is of this form.
We obtain the following transduction.
\begin{corollary}\label{cor:partitive}
    There is an \MSO-transductions $\tau$ such that given  any weakly-partitive set system $(U,\mathcal{S})$ modelled by the $\{\SET\}$-structure $\mathbb{S}$, $\tau$ outputs the weakly-partitive tree $(T,\lambda,<)$ of $(U,\mathcal{S})$.
\end{corollary}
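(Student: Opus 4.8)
The plan is to build the desired transduction in two layers: first recover the laminar tree $T$ of the strong members of $\mathcal{S}$ by reducing to \cref{thm:main}, and then enrich $T$ with the labelling $\lambda$ and the order $<$ using additional \MSO{} formulas evaluated over the original structure $\mathbb{S}$. The observation that makes this possible is that an \MSO-transduction specifies each output relation by an \MSO{} formula over its input; so once I have a transduction producing $T$ from $\mathbb{S}$, I may adjoin further output relations $\lambda$ and $<$ defined by \MSO{} formulas over $\mathbb{S}$ (which may freely refer to $\SET_{\mathbb{S}}$ and to the already-defined $\descendantT$) without leaving the class of \MSO-transductions. Throughout I rely on the structure theorem for weakly-partitive families \cite{chein1981partitive}, which guarantees that the tree together with the node labels and child-orderings faithfully encodes $\mathcal{S}$.

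First I would isolate the laminar sub-family. Call $F\in\mathcal{S}$ \emph{strong} if it crosses no member of $\mathcal{S}$, i.e.\ $\forall F'\,(\SET(F')\rightarrow F\cap F'\in\{\emptyset,F,F'\})$; this is \MSO{} over $\{\SET\}$. After (if necessary) forcing $U$ and all singletons into the family --- they are strong and create no new crossings --- the strong members form a laminar set system $(U,\mathcal{F})$ whose laminar tree is exactly the tree $T$ underlying the weakly-partitive tree. Composing the (\MSO-definable) restriction $\mathbb{S}\mapsto\mathbb{F}$ that redefines $\SET$ to hold only for strong sets with the transduction of \cref{thm:main}, and using closure of \MSO-transductions under composition \cite{CE09}, yields an \MSO-transduction from $\mathbb{S}$ to $T$. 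Unfolding this composition, it is given by a bounded number of copies, parameters $\bar P$, a domain formula, and a formula $\theta_{\descendantT}(x,y,\bar P)$, all \MSO{} over $\mathbb{S}$; I keep these copies, parameters and domain fixed for the remainder.

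Next I would define $\lambda$ and $<$ over this fixed domain. Identifying each leaf with its element of $U$, the set $F_v\subseteq U$ of a node $v$ is \MSO-definable from $\theta_{\descendantT}$, and a set $Y\subseteq F_v$ is a union of children of $v$ exactly when $Y$ is nonempty and, for every child $c$ of $v$, either $F_c\subseteq Y$ or $F_c\cap Y=\emptyset$ --- an \MSO{} condition. Hence I can classify each internal node: it is \emph{complete} if every such $Y$ satisfies $\SET(Y)$, and otherwise I examine the betweenness relation on its children given by ``$b$ lies between $a$ and $c$'' iff every $Y$ with $\SET(Y)$ that is a union of children of $v$ and contains $F_a\cup F_c$ also contains $F_b$. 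This relation is \MSO-definable over $\{\SET\}$ and $\descendantT$; the node is \emph{linear} when its child-unions in $\mathcal{S}$ are exactly the intervals of the resulting linear (betweenness) order, and \emph{prime} otherwise. These three cases are \MSO-definable and give $\lambda$ (fixing, by the convention of \cite{chein1981partitive}, the label of degree-two nodes). For $<$ I would add one monadic parameter $W$ marking, for each linear node, one of its two betweenness-endpoints; correctness of this guess is \MSO-checkable, and for children of a linear node $v$ I set $a<b$ iff ``$a$ lies between the marked endpoint $w_v$ and $b$''. All these formulas are \MSO{} over $\mathbb{S}$, so adjoining $\lambda$ and $<$ to the transduction producing $T$ gives the required \MSO-transduction.

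The main obstacle is the interface between the two layers: \cref{thm:main} is only guaranteed to behave correctly on laminar inputs, so I cannot feed $\mathbb{S}$ to it directly, yet I must not discard $\mathcal{S}$, since the labels and orders depend precisely on the crossing members. The resolution --- composing with the \MSO-definable restriction to strong members to obtain formulas over $\mathbb{S}$, and then reusing the input structure to define $\lambda$ and $<$ on the very same domain --- is the crux, and it is where one verifies that everything stays within a single \MSO-transduction. The remaining content is checking, via \cite{chein1981partitive}, that the betweenness relation recovers the child-order of linear nodes (necessarily only up to reversal, which the parameter $W$ resolves) and that the complete/linear/prime trichotomy is captured exactly by the stated \MSO{} conditions.
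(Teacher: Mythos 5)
Your proposal is correct, and its skeleton --- \MSO-define the strong (non-crossing) members of $\mathcal{S}$, force $U$ and the singletons in so that they form a laminar family, apply \cref{thm:main} to obtain the tree $T$, then add $\lambda$ and $<$ by \MSO{} formulas that still refer to the original predicate $\SET_{\mathbb{S}}$ --- is the same pipeline the paper follows. The difference lies in how the last step is discharged. The paper never constructs $\lambda$ and $<$ itself: it observes that the transductions of \cite{CampbellGKKK25} already implement exactly this pipeline with the laminar-tree transduction used as a black box, and that all steps other than that black box are plain \MSO; so it suffices to substitute \cref{thm:main} for the \CMSO{} black box. (The paper also restates \cref{thm:main} as an \emph{overlay} transduction outputting $\mathbb{F}\sqcup\mathbb{T}$ precisely so that $\SET$ remains available to the final interpretation; you achieve the same effect by flattening the composed transduction into copies, parameters and formulas over $\mathbb{S}$ --- both are legitimate.) Your version is thus more self-contained: the betweenness relation on children, the complete/linear/prime trichotomy, and the endpoint parameter $W$ resolving the reversal ambiguity are exactly the content of the paper's step 4 that it imports by citation from \cite{CampbellGKKK25,chein1981partitive}. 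What the citation route buys is not having to re-verify the conventions your explicit construction must get right: when singletons or $U$ do not belong to $\mathcal{S}$, single children of a node need not satisfy $\SET$, so ``complete'' should quantify over unions of at least two children; nodes with exactly two children are simultaneously complete, linear and prime and need a fixed convention; and the use of the parameter $W$ makes the transduction non-deterministic, which is harmless since the order at a linear node is in any case only canonical up to reversal. None of these affects the correctness of your approach, but they are the points a full write-up would need to pin down.
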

Instead of considering set systems, we can also consider systems of bi-partitions. Systems of bi-partitions naturally model concepts in graph theory such as splits and bi-joins.  The concept analogous to weak-partitiveness in the world of systems of bipartitions is weak-bipartitiveness. We obtain a similar corollary which allows one to \MSO-transduce a tree and additional structure that captures the bipartitions of a weakly-bipartitive system.  
\begin{corollary}\label{cor:biPartitive}
    There is an \MSO-transductions $\tau$ such that, given any weakly-bipartitive system of bipartitions  $(U,\mathcal{B})$ modelled by the $\{\BIPART\}$-structure $\mathbb{B}$, $\tau$ outputs the weakly-bipartitive tree $(T,\lambda,<)$ of $(U,\mathcal{B})$.
\end{corollary}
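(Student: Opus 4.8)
The plan is to reduce the statement to the weakly-partitive case already established in \cref{cor:partitive}, exploiting the standard closure of \MSO-transductions under composition. The bridge is a classical structural correspondence (see \cite{chein1981partitive}): if one fixes a ground element $u_0 \in U$ and orients every bipartition $\{A,B\} \in \mathcal{B}$ towards the side avoiding $u_0$, then the resulting family of sets on $U \setminus \{u_0\}$ is a weakly-partitive set system whose weakly-partitive tree is precisely the weakly-bipartitive tree of $(U,\mathcal{B})$ rooted at the leaf $u_0$. I would therefore build $\tau$ as a composition $\tau_3 \circ \tau_2 \circ \tau_1$, where $\tau_1$ performs this rooting, $\tau_2$ is the transduction of \cref{cor:partitive}, and $\tau_3$ undoes the rooting.

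The first transduction $\tau_1$ guesses a single distinguished element $u_0$ (a monadic parameter constrained by the domain formula to be a singleton), sets the output universe to $U \setminus \{u_0\}$, and defines, for $X \subseteq U \setminus \{u_0\}$, the predicate $\SET(X)$ to hold exactly when $\BIPART$ holds of the bipartition $\{X, U \setminus X\}$ (the side $U \setminus X$ being the one containing $u_0$). All of this is \MSO-definable over the input structure, where $U$ and $u_0$ remain available. Here the base conditions are handled for free: the singleton bipartitions $\{\{u\}, U \setminus \{u\}\}$ with $u \neq u_0$ map to the singletons $\{u\}$ of the new ground set, while $\{\{u_0\}, U \setminus \{u_0\}\}$ maps to the full set $U \setminus \{u_0\}$. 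A short computation with the four operations $F_1 \cap F_2$, $F_1 \cup F_2$, $F_1 \setminus F_2$, $F_2 \setminus F_1$ shows that two bipartitions cross exactly when their oriented sets cross, and that the weakly-bipartitive closure axioms translate precisely into the weakly-partitive closure axioms; hence $\tau_1$ indeed outputs a weakly-partitive set system.

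Composing with $\tau_2$ from \cref{cor:partitive} yields the weakly-partitive tree $(T', \lambda', <')$ of the oriented system, rooted at the node for $U \setminus \{u_0\}$. The final transduction $\tau_3$ reattaches $u_0$ as a leaf and passes from this rooted tree to the weakly-bipartitive tree: it reinterprets the $\descendantT$-relation as the appropriate (unrooted) tree adjacency dictated by the output representation, and performs a bounded, local repair of the labelling $\lambda$ and the order $<$ at the rooting node — for instance converting a linear order at the root back into the circular/ordered label required by the bipartition structure. Such finite surgery near one node is straightforwardly \MSO-definable, and by closure under composition the overall $\tau$ is an \MSO-transduction.

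I expect the main obstacle to be the label-and-order repair carried out by $\tau_3$: it must faithfully convert the partitive node types and linear orders produced by \cref{cor:partitive} into their bipartitive counterparts at the root, and getting this local translation exactly right — then checking it against the definition of the weakly-bipartitive tree — is the technical heart of the argument, even though each individual modification is \MSO-expressible. A closely related subtlety is that the output must be independent of the guessed element $u_0$, so that $\tau$ is functional and outputs exactly the weakly-bipartitive tree: different choices of $u_0$ produce different rooted set systems, but all must yield the same tree after $\tau_3$, which is precisely where the intrinsic, orientation-free nature of the weakly-bipartitive tree is invoked.
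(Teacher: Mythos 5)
Your proposal is correct in substance, but it is factored differently from the paper's argument. The paper proves this corollary (together with \cref{cor:partitive} and \cref{cor:graphDecompositions}) by a swap-the-black-box argument: the transduction of \cite{CampbellGKKK25} for weakly-bipartitive systems already consists of (1) an \MSO-interpretation that orients each bipartition to the side not containing a fixed element --- exactly your $u_0$ trick --- (2) an \MSO-interpretation selecting the oriented sets that cross no other, (3) the laminar-tree transduction, and (4) an \MSO-interpretation restoring $\lambda$ and $<$; only step (3) used \CMSO, so substituting \cref{thm:main} for it finishes the proof. You instead rebuild the reduction by composing the orientation step with \cref{cor:partitive} as a black box and then converting the weakly-partitive tree of the oriented family into the weakly-bipartitive tree. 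This is sound: your crossing computation is right (the corner containing $u_0$ is always nonempty, so bipartitions cross exactly when their oriented sides overlap, and the closure axioms translate as you say), and the correspondence between the rooted weakly-partitive tree of the oriented family and the weakly-bipartitive tree rooted at $u_0$ is classical --- though it is due to Cunningham and to de~Montgolfier and Rao rather than \cite{chein1981partitive}, which treats the partitive case. What your route buys is independence from the internals of \cite{CampbellGKKK25}; what it costs is exactly this extra structural theorem, which the paper's citation-based route never needs because it decorates the laminar tree with the bipartitive $\lambda$ and $<$ directly.

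One step of your $\tau_3$ is understated. The repair of $\lambda$ and $<$ is not local to the rooting node: every circular node of the weakly-bipartitive tree appears in the partitive tree of the oriented family as a linear node whose circular order has been broken open at the edge towards $u_0$, so passing back requires closing the order of \emph{every} linear node into a circular order using its parent edge, not just performing surgery at the root. This does not break the argument --- the parent edge is \MSO-definable at each node, so the conversion is uniformly expressible --- but the ``finite surgery near one node'' description is inaccurate and is precisely where a careless implementation would produce the wrong $<$. Your worry about independence of the guessed $u_0$ is handled the same way the paper handles its own guessed colourings: the transduction is non-deterministic, and it suffices that every guess surviving the filtering step outputs a structure isomorphic to the canonical weakly-bipartitive tree, which your orientation-free correspondence guarantees.
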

Finally, we can apply \cref{thm:main} and the machinery from \cite{CampbellGKKK25} to obtain transductions for the graph decompositions mentioned above whose underlying set system or system of bipartitions are weakly-partitive or weakly-bipartitive, repectively.
\begin{corollary}\label{cor:graphDecompositions}
    There are non-deterministic \MSO-transductions $\tau_1,\ldots,\tau_4$ such that: 
    \begin{enumerate}
        \item 
            Given any (directed) graph $G$ as input, $\tau_1$ outputs the modular decomposition $(T,F)$ of $G$.
        \item 
            Given any (directed) cograph $G$, $\tau_2$ outputs the cotree $(T,\lambda)$ of $G$.
        \item 
             Given any (directed) graph $G$, $\tau_3$ outputs a split decomposition $(T,F)$ of $G$.
        \item 
             Given any graph $G$, $\tau_4$ outputs a bi-join decomposition $(T,F)$ of $G$. 
    \end{enumerate}
\end{corollary}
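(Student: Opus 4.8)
The plan is to obtain each $\tau_i$ as a composition of \MSO-transductions along a three-stage pipeline: first transduce from the input graph to an appropriate weakly-partitive set system or weakly-bipartitive system of bipartitions, then transduce from that system to the corresponding tree-with-structure via \cref{cor:partitive} or \cref{cor:biPartitive}, and finally reorganize that output into the stated decomposition format. Since \MSO-transductions are closed under composition, it suffices to exhibit an \MSO-transduction for each individual link, and the non-determinism in the statement arises only from the guessed colourings used by these transductions.

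First I would recall the classical set-system representations of each decomposition. The modules of a (directed) graph form a weakly-partitive set system whose weakly-bipartitive\slash weakly-partitive tree $(T,\lambda,<)$ is exactly the modular decomposition tree (with the quotient labelling encoded by $\lambda$), so $\tau_1$ reads off $(T,F)$ from this; for (directed) cographs every internal node is complete, so $\tau_2$ is the restriction of this construction followed by the relabelling that records only the series\slash parallel type $\lambda$. Splits and bi-joins are bipartitions of the vertex set and form weakly-bipartitive systems of bipartitions whose weakly-bipartitive trees are the split and bi-join decompositions, respectively, so $\tau_3$ and $\tau_4$ read off $(T,F)$ from the output of \cref{cor:biPartitive}. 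These correspondences are standard (see \cite{chein1981partitive} and the references therein). The first link of each pipeline is supplied by \cite[Theorem~1]{CampbellGKKK25}, which already produces the relevant system from the graph using \MSO, and the final relabelling link is a routine \MSO-transduction.

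The key point, and the reason this is a genuine strengthening rather than a restatement, is that the only place in the \cite{CampbellGKKK25} pipeline that previously required counting quantifiers is the extraction of the laminar tree from the underlying laminar set system; every other transduction in that pipeline is already \MSO. Replacing that single \CMSO step with the \MSO-transduction of \cref{thm:main}, packaged through \cref{cor:partitive} and \cref{cor:biPartitive}, removes the counting quantifiers from the entire construction. I therefore expect the main obstacle to be bookkeeping rather than a new idea: one must verify that counting was genuinely confined to the laminar-tree step, so that substituting \cref{thm:main} does not disturb the remaining transductions, and one must check that the guessed colourings can be chosen consistently when the stages are composed. Once this is confirmed, closure of \MSO-transductions under composition yields the four transductions $\tau_1,\ldots,\tau_4$ without any counting predicates.
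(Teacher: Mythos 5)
Your overall strategy---substitute the \MSO-transduction of \cref{thm:main} for the single \CMSO step in the pipeline of \cite[Theorem~1]{CampbellGKKK25}, and conclude by closure of \MSO-transductions under composition---is exactly the paper's proof, and for $\tau_1$, $\tau_2$ and $\tau_3$ your argument goes through as you describe.

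However, for $\tau_4$ (the bi-join decomposition) there is a concrete gap in the purely black-box substitution you propose. The transduction for bi-join decompositions in \cite{CampbellGKKK25} does not use the laminar-tree transduction only as a black box: downstream of the tree-extraction step it also uses a representative predicate $\rep_A(x,X)$ that is defined \emph{inside} their \CMSO laminar-tree transduction. This predicate is genuinely needed there: a bi-join $\{X,Y\}$ is witnessed by sets $X'\subseteq X$ and $Y'\subseteq Y$ with $X'$ complete to $Y'$, $X\setminus X'$ complete to $Y\setminus Y'$, and no further edges between $X$ and $Y$; the decomposition introduces an auxiliary vertex for each of $X'$, $X\setminus X'$, $Y'$, $Y\setminus Y'$, and since the bi-join is completely symmetric, some canonical choice attached to the tree node---the representative---is what breaks the symmetry and decides which set is associated with which auxiliary vertex. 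If you swap in \cref{thm:main} as a black box, that predicate simply no longer exists, and your final ``reorganization'' interpretation for $\tau_4$ cannot be defined. The fix, which is the one nontrivial observation in the paper's proof, is that the predicate $\leader_i(r,X)$ constructed in the proof of \cref{thm:main} can play exactly the role of $\rep_A(x,X)$; the substitution must be accompanied by this replacement rather than performed blindly. Your own caveat (``verify that counting was genuinely confined to the laminar-tree step'') points at the right class of issue, but the failure for $\tau_4$ is not about counting at all---it is that the interface exposed by the old transduction is richer than its output structure.
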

In \cref{sec:beyondLaminar}, we discuss in further detail how to modify the proofs from~\cite{CampbellGKKK25} to obtain \cref{cor:partitive,cor:biPartitive,cor:graphDecompositions}.

We believe that this line of research of considering transductions for simple, but dense tree-structured decompositions is important in the context of definability-equals-recognizability results. Notably, there are such results for trees \cite{ThatcherWright} and graphs of bounded treewidth \cite{BojanczykP16} which we can think of as sparse tree-structured decompositions. Additionally, there are definability-equals-recognizability results for dense graph classes in the case the decomposition used has a linear structure \cite{BojanczykGP21,CampbellGKKO25}. It is open, however, whether in graphs of bounded clique-width, for which \cite{DBLP:journals/mst/CourcelleMR00} yields one direction, whether definability equals recognizability. In some sense, our results solve a restricted version of this major open problem.

\section{Preliminaries}
\label{sec:prelim}
We denote by $[n]$ for $n\in \mathbb{N}$ the set $[n]=\{1,\dots, n\}$. A \emph{partition} $(P^1,\dots, P^\ell)$ of a set $U$ has to satisfy that the parts $P_i$ are pairwise disjoint and their union is $U$ but we do not require parts to be non-empty. For a partition $(P^1,\dots, P^\ell)$ a \emph{refinement} of $(P^1,\dots, P^\ell)$ is a partition $(S^1,\dots, S^k)$ such that each $S^i\subseteq P^j$ for some $j\in [\ell]$.

We consider trees to be rooted and, therefore, equipped with an ancestor-descendant relationship. We use the usual terminology regarding \emph{parent, child} and \emph{sibling}. We distinguish between \emph{ancestors/descendants} of a node $t$ that include $t$ itself, and \emph{proper ancestors/proper descendants} of $t$ that are all ancestors/descendants except $t$.
In addition, for a tree $T$ we denote the set of nodes by $V(T)$ and the set of edges of $T$ by $E(T)$.
For a tree $T$ nodes of degree $1$ are called \emph{leaves} of $T$.
Each node in $T$ that is not a leaf, we call an \emph{inner node}.
For a tree $T$ and a node $t\in V(T)$, we denote by $T_t$ the subtree of $T$ induced by all descendants of $t$. 

\subsection{Logic and transductions}
In order to model set systems we consider extended relational structures which extend the usual notion of relational structures to allow the use of set predicates. An \emph{extended relational vocabulary} is a set $\Sigma$ of symbols  each having an associated arity, denoted $\arity(Q)$ for $Q\in \Sigma$, where each symbol is either a relation name or a set predicate name.  An \emph{extended relational structure} over $\Sigma$, or short \emph{$\Sigma$-structure}, is a tuple $\mathbb{A}=(U_\mathbb{A},(Q_\mathbb{A})_{Q\in \Sigma})$ consisting of a finite set of elements $U_\mathcal{A}$, called the \emph{universe} of $\mathbb{A}$, and an interpretation $Q_{\mathbb{A}}$ of every $Q\in \Sigma$ where $Q_{\mathbb{A}}$ is a relation over $U_{\mathbb{A}}$ if $Q$ is a relational name and a relation over subsets of $Q_{\mathbb{A}}$ if $Q$ is a set predicate name. For two vocabularies $\Sigma$ and $\Gamma$, and a $\Sigma$-structure $\mathbb{A}$ and a $\Gamma$-structure $\mathbb{B}$ we let  the \emph{union}, denoted $\mathbb{A}\sqcup \mathbb{B}$, be the $\Sigma\cup \Gamma$-structure with universe $U_\mathbb{A}\cup U_\mathbb{B}$ and  the interpretation $Q_{\mathbb{A}\sqcup\mathbb{B}}$ of some symbol~$Q\in\Sigma\cup\Gamma$ being either~$Q_{\mathbb{A}}$, $Q_{\mathbb{B}}$ or~$Q_{\mathbb{A}}\cup Q_{\mathbb{B}}$
depending on whether~$Q$ belongs to~$\Sigma\setminus\Gamma$, to~$\Gamma\setminus\Sigma$, or to~$\Sigma\cap\Gamma$.
For a $\Sigma$-structure~$\mathbb{A}$
and a $\Gamma$-structure~$\mathbb{B}$,
we write~$\mathbb{A}\sqsubseteq\mathbb{B}$
if~$\Sigma\subseteq\Gamma$,
$U_{\mathbb{A}}\subseteq U_{\mathbb{B}}$
and for each symbol~$Q$ in~$\Sigma$,
$Q_{\mathbb{A}}=Q_{\mathbb{B}}$.

We define syntax and semantic of \emph{monadic second order logic}, short \MSO, in the usual way and refer, for example, to
\cite{CE09,FunkMN22,Hlineny06,Strozecki11} for the definition of \MSO on extended relational structures. \MSO extends first-order logic introducing set variables and allowing quantification over set variables. To clearly identify the type of variables, we distinguish between \emph{element variables} and \emph{set variables} and use 
 lowercase letters, such as $r,x,y,z,\dots$, for element variables  and uppercase letters, such as $R,X,Y,Z,\dots$, for set variables. 
For a formula~$\phi$, we write $\phi(x_1,\dots,x_\ell,X_1,\dots, X_k)$ to indicate that the free element variables of $\phi$ are $x_1,\dots, x_\ell$ and the free set variables of $\phi$ are $X_1,\dots, X_k$,
namely, the set of variables occurring in~$\phi$ that are not bound to a quantifier within~$\phi$.
A \emph{sentence} is a formula without any free variables. Where possible, we  follow the convention that relational names and formulas with only free element variables are lowercase while set predicate names as well as formulas that have only free set variables receive uppercase names.

We use the following extended relational structures to model the structures used in this paper. We use vocabulary $\{\SET\}$ where set is a set predicate name of arity $1$ to model set systems. Naturally a set system $(U,\mathcal{F})$ is modelled by the $\{\SET\}$-structure $\mathbb{F}$ with universe $U_\mathcal{F}=U$ and for which $\mathcal{F}$ is the interpretation $\SET_\mathbb{F}$. Additionally, we use vocabulary $\{\descendant\}$ to model trees. Here a tree $T$ is modelled by the $\{\descendant\}$-structure $\mathbb{T}$ with universe $U_\mathbb{T}=V(T)$ and the interpretation $\descendant_\mathbb{T}$ being the descendant relationship of $T$. Note that choosing the descendant relationship is arbitrary and the relations $\ancestor, \child$ and $\parent$ can easily be defined in $\MSO$.

We note that \MSO is sufficient to define laminar sets systems. 
Indeed, laminar set systems are exactly those that satisfy the $\{\SET\}$-formula
\begin{align*}
&\Big(\exists U\forall x (x\in U\vee \SET(U))\Big)\land \Big(\forall x\exists S \forall y (\SET(S)\land\left[y\in S\leftrightarrow y=x\right])\Big)\land\\
&\qquad\Big(\forall F_1\forall F_2(\SET(F_1)\wedge\SET(F_2)\rightarrow 
\left[F_1\cap F_2=\emptyset\vee F_1\cap F_2=F_1\vee F_1\cap F_2=F_2\right])\Big).
\end{align*}

\paragraph*{Transductions}
Let~$\Sigma$ and~$\Gamma$ be two extended relational vocabularies.
A \emph{$\Sigma$-to-$\Gamma$ transduction}
is a set~$\tau$ of pairs formed by
a $\Sigma$-structure, call the \emph{input},
and a $\Gamma$-structure, called the \emph{output}.
We write \emph{$\mathbb{B}\in\tau(\mathbb{A})$} when~$(\mathbb{A},\mathbb{B})\in\tau$.
When for every pair~$(\mathbb{A},\mathbb{B})\in\tau$
we have~$\mathbb{A}\sqsubseteq\mathbb{B}$,
we call~$\tau$ an
\emph{overlay transduction}.
Transductions
are defined over some logic $\mathcal{L}$, such as \FO, \MSO or \CMSO. In  this paper the ambient logic is \MSO and we therefore restrict the following definitions to this case. 
We define an \emph{\MSO-transduction}
to be a transduction obtained
by composing a finite number of \emph{atomic \MSO-transductions}
of the following kinds.
\begin{description}
    \item[Colouring]
        A \emph{$k$-colouring tansduction}  adds $k$ new unary predicates~$\colorV_i$ for every $i\in [k]$ to the signature $\Sigma$ while adding $k$ new unary relations to the original $\Sigma$-structure $\mathcal{A}$ interpreting the $k$ additional predicates.
        Any possible interpretation yields an output of the transduction.
        Hence, it defines a total (non-functional) relation
        from $\Sigma$-structures to~$\Gamma$-structures
        where~$\Gamma=\Sigma\cup\{\colorV_i\mid i\in [k]\}$.
    \item[Copying]
        A \emph{$k$-copying transduction} adds $k$ copies of the universe of the original  $\Sigma$-structure $\mathbb{A}$ to  $\mathbb{A}$ while introducing new binary predicates $\copyV_i(x,y)$ for every $i\in [k]$ which expresses that element $x$ is the $i$-th copy of the original element $y$. Hence, this defines a function from~$\Sigma$-structures to~$\Gamma$-structures,
        where~$\Gamma=\Sigma\cup\{\copyV_i\mid i\in[k]\}$. For sake of easier notation, we assume that a binary predicate $\copyV_0(x,y)$ is added which is satisfied for elements $x,y$ if $x$ is an original element and $y=x$ ($y$ is the original copy of $x$).
    \item[Filtering]
        A \emph{filtering transduction} is specified by a sentence $\chi$ over $\Sigma$ and outputs the original $\Sigma$-structure $\mathbb{A}$ if $\mathbb{A}$ satisfies $\chi$.
        Hence, a filtering transduction defines a partial function from $\Sigma$-structures to $\Sigma$-structures.
    \item[Interpretation] 
        An \emph{interpretation} is a transduction from $\Sigma$ to $\Gamma$-structure for any signature $\Sigma$ and $\Gamma$. An interpretation is specified by a tuple $(\phi,(\psi_Q)_{Q\in \Gamma})$ where $\phi$ is a formula over $\Sigma$ with one free variable and $\psi_Q$ for any $Q\in \Gamma$ is a formula over $\Sigma$ with $\arity(Q)$ free variables (which may be set variables if $Q$ is a set predicate). For input $\Sigma$-structure $\mathbb{A}$ the interpretation returns the $\Gamma$-structure whose universe consists of all elements of the universe of $\mathbb{A}$ satisfying $\phi$ and for which $Q$ is interpreted by the set of all tuples satisfying $\psi_Q$ for each $Q\in \Gamma$.
        This defines a function from $\Sigma$-structures to $\Gamma$-structures.
\end{description}
We often refer to unary predicates as colours and therefore we describe $k$-colouring transductions as a colouring of the universe with $k$ colours. Note that in such a colouring any element might receive multiple colours or even no colour at all.

In the following we show that we can easily obtain laminar set systems from their laminar trees by means of a transduction.
\begin{lemma}\label{lem:backwardsDirection}
    For any laminar set system $(U,\mathcal{F})$ with laminar tree $T$ modelled by the $\{\descendant\}$-structure $\mathbb{T}$ there is an \MSO-transduction $\tau$ which on input $\mathbb{T}$ produces the $\{\SET\}$-structure $\mathbb{F}$ that models $(U,\mathcal{F})$.
\end{lemma}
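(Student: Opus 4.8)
The plan is to realise $\tau$ as a single \MSO-interpretation that discards all inner nodes of $T$, keeps the leaves as the universe, and defines the set-predicate $\SET$ to hold exactly on the leaf-sets of the nodes of $T$. The guiding observation is that, by the definition of the laminar tree, the universe $U$ is (identified with) the set of leaves of $T$, and the family $\mathcal{F}$ is in bijection with $V(T)$ via the map sending a node $t$ to the set of leaves that are descendants of $t$; the image of this map is precisely $\mathcal{F}$. So the whole construction is just a matter of reading off this bijection in \MSO.

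First I would define, over the input vocabulary $\{\descendant\}$, an \MSO-formula $\leafT(x)$ expressing that $x$ is a leaf, i.e.\ that $x$ has no proper descendant. Writing $\descendant(y,x)$ to mean that $y$ is a descendant of $x$, one can take
\[
  \leafT(x) \;:=\; \neg\exists y\,\big(\descendant(y,x)\wedge y\neq x\big),
\]
and this is \MSO-definable irrespective of whether $\descendant$ is taken reflexive or irreflexive, since each convention defines the other in \MSO.

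Next I would give the interpretation $(\phi,\psi_{\SET})$ from $\{\descendant\}$-structures to $\{\SET\}$-structures. For the domain formula I take $\phi(x):=\leafT(x)$, so that the universe of the output is exactly the set of leaves of $T$, which we identify with $U$. For the set-predicate I take
\[
  \psi_{\SET}(X)\;:=\;\exists t\,\forall y\,\Big(y\in X \leftrightarrow \big(\leafT(y)\wedge(\descendant(y,t)\vee y=t)\big)\Big),
\]
expressing that $X$ is exactly the set of leaves that are descendants of some node $t$. Here the quantifiers $\exists t$ and $\forall y$ range over all of $V(T)$, since the interpretation is evaluated on the input structure $\mathbb{T}$, so that $t$ may be an inner node; meanwhile the free set variable $X$ ranges over subsets of the output universe, i.e.\ over sets of leaves, and the conjunct $\leafT(y)$ makes the biconditional behave correctly on any non-leaf witness $y$. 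I would then verify correctness: for every node $t$ the set defined on the right-hand side is the set of leaf-descendants of $t$, which under the identification of $u\in U$ with the leaf $\{u\}$ is exactly the member of $\mathcal{F}$ corresponding to $t$; and conversely every $F\in\mathcal{F}$ arises in this way from its node $v_F$. Hence the interpreted relation $\SET$ equals $\mathcal{F}$, so the output is $\mathbb{F}$.

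I expect no serious obstacle, as this is the easy direction: the content is only the bijection between nodes and sets together with one interpretation, and no colouring, copying, or filtering is needed. The one point that warrants care is the semantics of interpreting a \emph{set}-predicate, namely that the internal quantifiers of $\psi_{\SET}$ see the whole tree (so that the witness $t$ can be an inner node) while the argument $X$ is constrained to the leaf-universe; once this is fixed the verification is immediate. The substantive difficulty of the paper lies in the converse construction of \cref{thm:main}.
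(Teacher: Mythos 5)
Your proposal is correct and takes essentially the same approach as the paper: a single \MSO-interpretation whose domain formula keeps exactly the leaves of $T$ and whose formula $\psi_{\SET}(X)$ asserts that $X$ is the set of leaf-descendants of some node $t$. If anything, your version is the more careful one: the paper writes $\psi_{\SET}(X):=\exists y\forall x\big(\descendant(x,y)\leftrightarrow x\in X\big)$ without relativizing to leaves, which under the stated semantics (the argument $X$ ranges over subsets of the leaf universe while the quantifiers range over all of $V(T)$) is satisfied only by singleton sets; the $\leafT(y)$ guard inside your biconditional is exactly what repairs this.
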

\begin{proof}
    We can transduce the set system from the laminar tree using a single \MSO-interpretation.
    We obtain the $\{\SET\}$-structure $\mathbb{F}$ modelling $(U,\mathcal{F})$ from $\mathbb{T}$ by restricting the universe consisting of all nodes of $T$ to just the leaves of $T$. Additionally, we can define the predicate $\SET$ by a formula $\psi_{\SET}(X)$ expressing that $X$ is the set of leaves of  the subtree $T_t$ of $T$ rooted at $t$ for some node $t\in V(T)$. Specifically, we can implement this interpretation as follows:
    \begin{align*}
        \phi(x)&:=\forall y\big(\descendant(y,x)\rightarrow y=x\big), \text{ and }\\
        \psi_{\SET}(X)&:= \exists y\forall x\big(\descendant(x,y)\leftrightarrow x\in X\big).
    \end{align*}
\end{proof}
Note by the Backwards Translation Theorem (\cref{thm:backwards_translation}), this means that any property of laminar set system that can be defined in the language of sets corresponds to a property of rooted trees definable in the language of rooted trees.

\section{Transducing laminar trees}\label{sec:mainTransduction}
\label{sec:transduction} 
In this section we prove our main result, which consists in constructing a transduction that, given a laminar set system $(U,\mathcal{F})$ outputs its laminar tree $T$.

Our first task is to choose an elements of the set system for every inner node. This element will play the role of the node in the tree structure. However, we need to guarantee that every element plays the role of at most a bounded number of inner nodes (in our case this bound is $17$) and we can identify which element plays the role of a particular node in \MSO.
In order to achieve this, we partition the nodes of $T$ into $17$ sets.
For each of these parts, we will use a copy of the universe $U$ to contain elements to play the roles of the nodes in this part.
We will call the element (in a copy of $U$) that plays the role of some node $t$ of $T$ the ``realization'' of the set $F\in \mathbb{F}$ corresponding to the node $t$.
Any copy of an element can be the realization of at most one set.
One of the $17$ parts will contain exactly the leaves of the tree $T$ for which it is trivial to identify realizations.  All other parts of the partition consist of a carefully chosen set of inner nodes. From now on, assume that $S$ is one of the parts containing only inner nodes of $T$. To find realizations for all nodes in $S$, we first choose a subtree of $T$ for each element $s\in S$, denoted $H_s$. We show that we can choose such trees to be pairwise disjoint if we choose $S$ to be thin.
Thinness roughly enforces that we can always grow the representative tree of a node $t\in S$ downwards in such a way that avoids $S$ itself, which is crucial for guaranteeing disjointness of representative trees. 

Besides choosing the representative trees disjoint, we also enforce them to have a particular structure. This structure allows us, given all the leaves of representative trees in $S$ to pinpoint (by means of an \MSO-formula) exactly which leaves belong to the same representative tree. Hence, we can use a colouring-transduction to identify all possible choices of the leaves of the representative trees of nodes in $S$. We then filter which colourings correctly correspond to the set of leaves of representative trees. We chose an arbitrary leaf in the set of leaves of the representative tree of a node $t\in S$ to be the realization of the set $F$ corresponding to $t$. Distinctness trivially follows from the disjointness of the representative trees.  We then interpret, restricting the universe to only contain realizations of sets and introducing the descendant relationship through the subset relationship.

\subsection{Representative sets}
Let $(U,\mathcal{F})$ be a laminar set system, $T$ be the laminar tree of $(U,\mathcal{F})$ and $L$ the set of leaves of $T$. We equip $T$ with a labelling $\lvlParity_T:V(T)\rightarrow \{0,1,2,3\}$ where a node of depth $d$ in $T$ is assigned label $d\mod 4$. 

Let $s$ be a node of $T$. 
A \emph{representative tree} of $s$ is a subtree $H$ of $T$ with the following properties. 
\begin{description}   
    \item[(RT1)] $s$ is the root of $H$.
    \item[(RT2)] If $\lvlParity_T(s)\equiv \lvlParity_T(t)\mod 2$ for some $t\in V(H)$, then either $t$ is a leaf of $T$ or $H$ contains exactly one child of $t$.
    \item[(RT3)]  If $\lvlParity_T(s)\not\equiv \lvlParity_T(t) \mod 2$ for some $t\in V(H)$, then either $t$ is a leaf of $T$ or $H$ contains every child of $t$.
\end{description}
For a set $B\subseteq L$ we call the parent of the least common ancestor of the leaves in $B$ the \emph{tip} of $B$. We call the minimum subtree of $T$ which contains both $B$ and the tip of $B$ the \emph{up-tree} of $B$, denoted $H_{B}^\uparrow$.
A \emph{representative set} of a node $s$ of $T$ is a set $A_s$ of leaves of $T$ for which $s$ is the tip and the up-tree $H_{A_s}^\uparrow$ of $A_s$ is a representative tree. 
  
A set $S\subseteq V(T)\setminus L$ of inner nodes of $T$ is called \emph{thin} if there is $i\in \{0,1,2,3\}$ such that the following properties hold.
\begin{description}
    \item[(TS1)] $\lvlParity_T(s)=i$ for every $s\in S$.
    \item[(TS2)] Every $s\in S$ has a sibling which is not in $S$ with the exception of the root of $T$.
    \item[(TS3)] The parent $t_s$ of every $s\in S$ has a sibling $t'_s$ such that no child of $t'_s$ is in $S$ with the exception of nodes $s\in S$ that are children of the root of $T$.
\end{description}
We first prove that thinness is enough to choose disjoint representative trees for all nodes of the thin set. For this, the following observation is an important ingredient.
\begin{observation}\label{obs:atLeastTwoChildren}
    In the laminar tree $T$ of any laminar set system $(U,\mathcal{F})$ every inner node has at least two children.
\end{observation}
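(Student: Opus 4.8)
The plan is to argue directly from the one defining feature of a laminar family that forces branching: every singleton $\{u\}$ with $u\in U$ belongs to $\mathcal{F}$. First I would pin down the correspondence. A node $v_F$ is an inner node precisely when $F$ is not a singleton, i.e.\ $|F|\ge 2$: if $|F|\ge 2$ then any $u\in F$ gives $\{u\}\in\mathcal{F}$ with $\{u\}\subsetneq F$, so $v_{\{u\}}$ is a proper descendant of $v_F$; and if $F=\{u\}$ then $v_F$ has no proper descendant and is a leaf. Under the laminar-tree correspondence, the children of $v_F$ are exactly the nodes $v_G$ with $G$ a maximal proper subset of $F$ in $\mathcal{F}$, and the leaves below $v_F$ are exactly the singletons $\{u\}$ with $u\in F$.

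Next I would set up the contradiction. Let $v_F$ be an inner node, so $|F|\ge 2$, and suppose towards a contradiction that $v_F$ has exactly one child $v_G$, where $G\subsetneq F$. In any rooted tree, a node with a unique child has all of its proper descendants sitting below that child; hence every proper descendant of $v_F$ is a descendant of $v_G$ and therefore corresponds to a subset of $G$.

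The key step is then to feed the leaves into this. Fix any $u\in F$. Since $|F|\ge 2$, the singleton $\{u\}$ is a proper subset of $F$, so $v_{\{u\}}$ is a proper descendant of $v_F$ and thus a descendant of $v_G$; consequently $\{u\}\subseteq G$, i.e.\ $u\in G$. As $u\in F$ was arbitrary, this yields $F\subseteq G$, contradicting $G\subsetneq F$. The same reasoning applies verbatim to the root: when $U$ is not a singleton the root $v_U$ is an inner node whose proper descendants include all singletons $\{u\}$ with $u\in U$, so a single child $v_G$ would again force $U\subseteq G\subsetneq U$.

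I expect this to be essentially routine. The only point requiring care is the bookkeeping that identifies the children of $v_F$ with the maximal proper subsets of $F$ and the leaves below $v_F$ with the singletons contained in $F$, together with the remark that a non-singleton set is never a leaf, so that ``inner node'' and ``non-singleton set'' coincide. Notably, the non-crossing condition is used only implicitly, in that it guarantees the tree correspondence; the entire force of the statement comes from the presence of all singletons in the family.
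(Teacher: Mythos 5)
Your proof is correct: the paper states this as an observation without any proof, and your argument (an inner node corresponds to a non-singleton $F$, and a unique child $G\subsetneq F$ would have to contain every singleton $\{u\}$ with $u\in F$, forcing $F\subseteq G$) is precisely the intended reasoning, relying only on the presence of all singletons in $\mathcal{F}$ and the descendant-equals-subset correspondence. The separate treatment of the root and the bookkeeping about children being maximal proper subsets are fine but not strictly needed beyond what you already use.
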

Note that for the following lemma property (TS2) is not needed. We, however, require this property later. 
\begin{figure}
    \includegraphics[width=\textwidth]{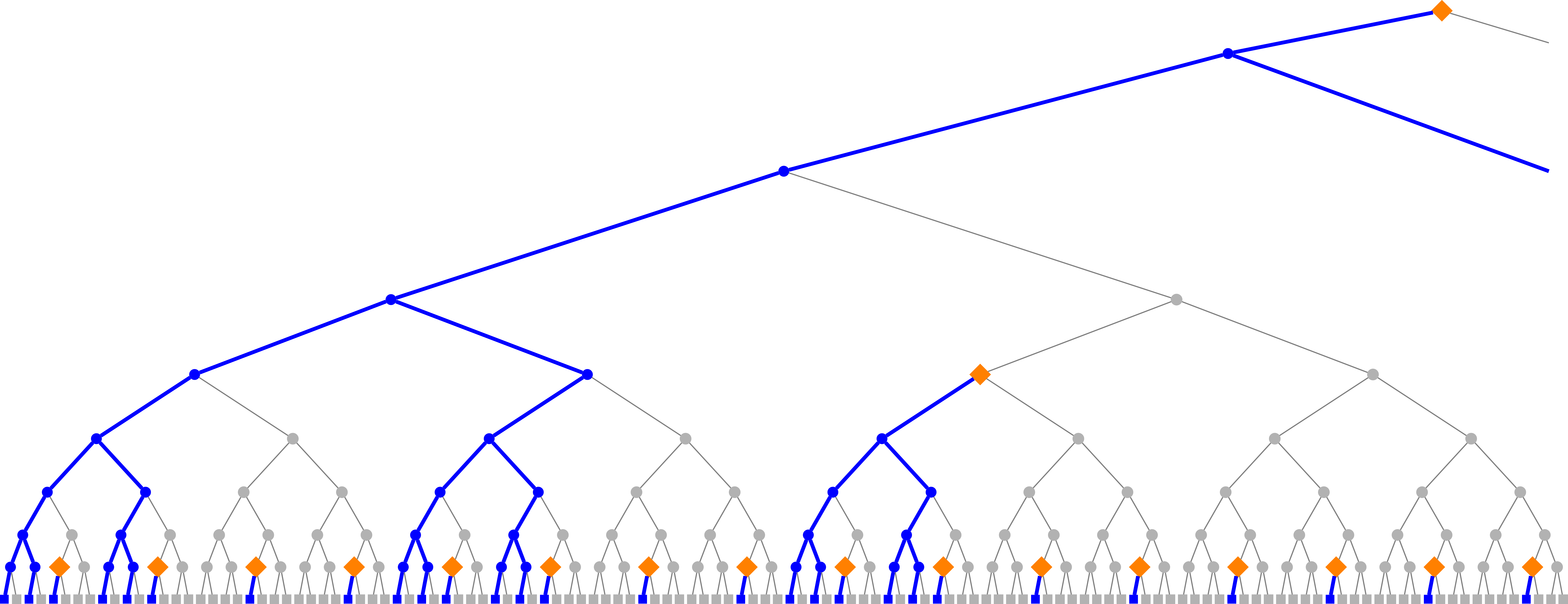}%
    \caption{%
        Part of a laminar tree with thin set $S$ consisting of the larger orange nodes and pairwise disjoint representative trees of the nodes in $S$ highlighted in blue. %
    }%
    \label{fig:representativeSets}
\end{figure}

\begin{lemma}\label{lem:disjointRepSets}
    If $S\subseteq V(T)\setminus L$ is thin, then we can choose representative sets $(A_s)_{s\in S}$ such that the representative trees $H_{A_s}^\uparrow$, $s\in S$ are pairwise disjoint.
\end{lemma}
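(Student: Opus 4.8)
The plan is to reduce the pairwise‑disjointness requirement to a statement about a single representative tree, and then to build the trees greedily from the top down, using thinness to route around $S$. First I would record that any representative tree of $s$ is rooted at $s$ and is obtained only by adding children (by (RT2) and (RT3)), hence it is contained in the subtree $T_s$. Consequently, if $s,s'\in S$ are incomparable then $T_s\cap T_{s'}=\emptyset$ and their representative trees are automatically disjoint. If instead $s'$ is a proper descendant of $s$, then every node of $T_{s'}$ has $s'$ on its path to $s$, so a connected subtree rooted at $s$ meets $T_{s'}$ if and only if it contains $s'$; since the representative tree of $s'$ lies in $T_{s'}$, the two trees are disjoint as soon as the representative tree of $s$ avoids $s'$. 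Thus the whole lemma reduces to the claim: for each $s\in S$ one can choose a representative tree $H_s$ whose only node in $S$ is its root $s$.

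To establish this claim I would build $H_s$ level by level from $s$, always making the forced choices of (RT2) and (RT3), and locate precisely where a node of $S$ could enter. By (TS1) every node of $S$ has depth $\equiv i\pmod 4$, and $\lvlParity_T$ alternates parity along every root‑to‑leaf path. Inspecting the four residue classes, a node $t\in V(H_s)$ of depth $\equiv i\pmod 4$ other than $s$ is, by (RT3), one of all the children of some $q\in V(H_s)$ of depth $\equiv i+3\pmod 4$, and this $q$ was, by (RT2), kept as the single child below some $p\in V(H_s)$ of depth $\equiv i+2\pmod 4$. So the only freedom that decides whether $H_s$ ever touches $S$ is the child chosen at the depth‑$(i+2)$ nodes.

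The construction rule is therefore: at every depth‑$(i+2)$ node $p$ encountered, keep a child $q$ of $p$ that has no child in $S$. Such a $q$ always exists: if some child of $p$ were the parent $t_{s'}$ of a node $s'\in S$, then $s'$ lies at least two levels below the root and so is not a child of the root, whence (TS3) gives $t_{s'}$ a sibling, necessarily another child of $p$, with no child in $S$; and if no child of $p$ is the parent of an $S$‑node, every child of $p$ already works. Keeping such a $q$ forces the ensuing depth‑$(\equiv i)$ level of $H_s$, which consists of all children of $q$, to avoid $S$ entirely. Iterating down to the leaves of $T$ yields a subtree $H_s$ satisfying (RT1)--(RT3) with no node of $S$ besides $s$. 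I would then set $A_s$ to be the leaves of $H_s$ and verify it is a representative set of $s$: by \cref{obs:atLeastTwoChildren} the unique child of $s$ kept in $H_s$ either is a leaf or has at least two children in $H_s$, so the least common ancestor of $A_s$ is exactly that child, its parent is $s$, and $H_{A_s}^{\uparrow}=H_s$. Disjointness then follows from the reduction in the first paragraph. This argument uses (TS1) and (TS3) but never (TS2), as announced.

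The main obstacle I expect is the second and third steps together: correctly pinpointing that $S$ can be entered only at the depth‑$(\equiv i)$ levels of $H_s$ and that these are governed by the single‑child choices made at depth $\equiv i+2$, and then checking cleanly that (TS3) supplies a safe child at every such branching, including the point that the root‑exception in (TS3) never interferes because the $S$‑nodes in question lie strictly below $s$.
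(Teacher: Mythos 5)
Your proof is correct and takes essentially the same route as the paper's: a top-down, level-by-level construction of $H_s$ in which (TS1) pins the $S$-nodes to one residue class, (TS3) supplies the safe single-child choice at the depth-$(i+2)$ nodes, disjointness is reduced to each tree avoiding the subtrees of the $S$-nodes below its root, and the tip/up-tree verification uses the fact that inner nodes have at least two children. The only cosmetic difference is the stated invariant --- you keep $H_s$ free of $S$ itself, while the paper keeps it free of the set $P$ of parents of $S$-nodes --- but the actual choice rule (pick a child having no children in $S$, i.e.\ a child not in $P$) is identical in both arguments.
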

\begin{proof} 
    Let $S\subseteq V(T)\setminus L$ be a thin set and let $i\in \{0,1,2,3\}$ be the index such that $\lvlParity_T(s)=i$ for every $s\in S$ which exists by (TS1). Let $P$ be the set of all nodes which are parents of a node in $S$. Note that every $p\in P$ satisfies $\lvlParity_T(p)\equiv i-1\mod 4$. We show that for every $s\in S$ we can choose a representative tree $H_s$, which avoids all nodes in $P$ (i.e. does not contain any node in $P$). This statement implies that the representative trees $H_s$ are pairwise disjoint. Clearly, since $H_s$ is a subtree of $T_s$ for every $s\in S$ by (RT1), $H_s$ and $H_{s'}$ must be disjoint whenever $s,s'\in S$ are not in any ancestor-descendant relationship. If, on the other hand, $s\in S$ is a proper ancestor of $s'\in S$ then $H_s$ cannot intersect $T_{s'}$ (and hence cannot intersect $H_{s'}$) since $H_s$ avoids the parent of $s'$. 

    We now describe how to construct $H_s$ which avoids $P$ for some fixed $s\in S$. We construct the tree $H_s$ recursively, level by level. The $0$-th level of $H_s$ consists of just $s$, which clearly avoids $P$. Assume that we have constructed $j$ levels of $H_s$ that avoid $P$ and aim to construct the $(j+1)$-th level. We distinguish the following cases. First, assume $j\equiv i \mod 4$. In this case, for every node $t$ on level $j$ of $H_s$, which is not a leaf, we choose an arbitrary child $t'$ of $t$ and add it to $H_s$. Note that the updated tree $H_s$ clearly still avoids $P$ as $\lvlParity_T(t')\equiv i+1 \mod 4$ for every newly added node $t'$ while nodes $p\in P$ satisfy $\lvlParity_T(p)\equiv i-1 \mod 4$. Next, assume that $j\equiv i+1 \mod 4$. In this case, for every $t$ on level $j$ of $H_s$, which is not a leaf of $T$, we add all children of $t$ to $H_s$. Again, each newly added vertex $t'$ cannot be contained in $P$ as they satisfy $\lvlParity(t')\equiv i+2 \mod 4$. Next, we assume that $j\equiv i+2 \mod 4$. In this case, for every node $t$ on level $j$ of $H_s$, which is not a leaf, we choose a child $t'$ of $t$ which is not in $P$ and add it to $H_s$. This choice is always possible, as $S$ is thin and, therefore, every node $p\in P$ must have a sibling that does not have children in $S$ by (TS3) which implies that this child is not contained in $P$. Clearly, the updated tree $H_s$ avoids $P$. Finally, assume $j\equiv i+3 \mod 4$. For every node $t$ on level $j$ of $H_s$, which is not a leaf, we add all its children to $H_s$. The resulting tree $H_s$ still must avoid $P$ as each newly added node $t'$ satisfies $\lvlParity_T(t')=i$ while nodes  $p\in P$ satisfy $\lvlParity_T(p)\equiv i-1 \mod 4$. Note that the tree $H_s$ is a representative tree of $s$ by construction because $s$ is the root of $H_s$ implying (RT1),  for every node $t$  on level $j$  with $j\equiv i \mod 2$ (unless $t$ is a leaf) we added exactly one child of $t$ to $H_s$ implying (RT2) and for every node $t$  on level $j$ with $j\not\equiv i \mod 2$ (unless $t$ is a leaf) we added all children of $t$ to $H_s$ implying (RT3). For an illustration of the construction of representative trees of a thin set $S$ we refer the reader to \cref{fig:representativeSets}.

    We can now set $A_s$ to be the leaves of $H_s$ for every $s\in S$. We argue that $H^\uparrow_{A_s}=H_s$. First note that because $s$ is the root of $H_s$ the least common ancestor of the leaves in $A_s$ is some descendant of $s$. By construction, there is exactly one child $t$ of $s$ which is contained in $H_s$ and therefore the least common ancestor of the leaves in $A_s$ is indeed a descendant of $t$. Let $t_1,\dots, t_\ell$ be the children of $t$ and note that $\ell\geq 2$ by \cref{obs:atLeastTwoChildren}. By construction $t_1,\dots, t_\ell$ are contained in $H_s$ (as $\lvlParity(t)\equiv i+1\mod 4$). Since both the subtree of $H_s$ rooted at $t_1$ and the subtree of $H_s$ rooted at $t_2$ must have a leaf and therefore an element of $A_s$, the least common ancestor of the leaves in $A_s$ is $t$. Therefore the tip of $A_s$ is $s$. Hence, by definition of up-tree, we know that $H^\uparrow_{A_s}$ is a subtree of $H_s$ as $H_s$ contains all nodes in $A_s$ and the tip of $A_s$. Furthermore, every node $t$ contained in $H_s$ must be in $H^\uparrow_{A_s}$ because the subtree of $H_s$ rooted at $t$ must contain a leaf $t'$ which, by construction, is in $A_s$. Then the path from $t'$ to $s$ must be contained in $H^\uparrow_{A_s}$ implying that $t$ is in $H^\uparrow_{A_s}$.  Since $H^\uparrow_{A_s}=H_s$ is a representative tree, $A_s$ is a representative set. Hence, we have chosen representative sets $(A_s)_{s\in S}$ with pairwise disjoint representative trees $H^\uparrow_{A_s}$ proving the statement.
\end{proof}

We now aim to prove that given the union $A:= \bigcup_{s\in S} A_s$ of representative sets of a thin set, the representative sets are uniquely defined. We further aim to characterize representative sets in such a way that we can later define them in \MSO. For this we use the following terminology.
Given a set $B$ of leaves of $T$ we say that an inner-node $s\in V(T)\setminus L$ is \emph{fully-branched} (by $B$) if for every child $t$ of $s$ the set of leaves in the subtree of $T$ with root $t$ contains a leaf from $B$. Additionally, we say that $s$ is \emph{single-branched} (by $B$) if there is exactly one child $t$ of $s$ for which the subtree of $T$ rooted at $t$ contains a leaf from $B$. We say that $s$ is \emph{missed} (by $B$) if the subtree of $T$ rooted at $s$ contains no leaves from $B$.

\begin{lemma}\label{lem:uniqueRepSets}
    Let $S$ be a thin set, $(A_s)_{s\in S}$ representative sets with pairwise disjoint up-trees $H_{A_s}^\uparrow$ and $A:= \bigcup_{s\in S} A_s$. For any $B\subseteq L$ and any node $s$ of $T$,
    $B=A_s$ if and only if all of the following hold: 
    \begin{romanenumerate}
        \item $B\subseteq A$,
        \item $s$ is the tip of $B$,
        \item if  a descendant $t$ of $s$ is fully-branched by $B$, then each child of $t$ is either a node single-branched by $B$ or a leaf in $B$,
        \item if a descendant $t$ of $s$ is single-branched by $B$ then exactly one child of $t$ is either a node fully-branched by $B$ or a leaf in $B$, while all other children of $t$ are either missed by $B$ or leaves not in $B$, and
        \item there is no proper superset $B'$ of $B$ and node $s'$ of $T$ such that $B'$ together with $s'$ satisfy properties {\normalfont(i)-(v)}. 
    \end{romanenumerate}
\end{lemma}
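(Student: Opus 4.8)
The heart of this statement is the branching bookkeeping in conditions (i)--(v): conditions (ii)--(iv) should pin down the \emph{shape} of a representative set, while (i) and (v) single out the \emph{particular} set $A_s$ from among all representative sets of $s$. I would therefore first prove a self-contained equivalence: \emph{for every $B\subseteq L$ and every node $s$, conditions (ii)--(iv) hold if and only if $B$ is a representative set of $s$} (that is, $s$ is the tip of $B$ and $H_B^{\uparrow}$ is a representative tree). The key point is that the leaves of the up-tree $H_B^{\uparrow}$ are exactly $B$ — since $B$ is an antichain of leaves whose tip is an ancestor of all of $B$ — so that a node of $H_B^{\uparrow}$ is single-branched (resp.\ fully-branched) \emph{by $B$} precisely when exactly one (resp.\ every) child lies in $H_B^{\uparrow}$, while a descendant of $s$ outside $H_B^{\uparrow}$ is missed. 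For the forward direction I would induct on the distance from $s$: the tip $s$ is always single-branched (the least common ancestor of $B$ is its unique child meeting $B$); at a single-branched node, (iv) forces the unique child meeting $B$ to be fully-branched (a node meeting $B$ in at least two but not all of its children is neither ``fully-branched'' nor ``a leaf in $B$'', so (iv) would fail), and at a fully-branched node, (iii) forces every child to be single-branched or a leaf in $B$. This propagates an alternating single/full pattern down to the leaves of $T$, with single-branched nodes occurring exactly at $t$ with $\lvlParity_T(t)\equiv\lvlParity_T(s)\pmod{2}$; this is precisely (RT1)--(RT3). The converse reads the same computation backwards from (RT2)/(RT3).

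With this equivalence, \cref{lem:disjointRepSets} lets me reduce both directions to the single claim
\[
(\star)\qquad\text{every representative set $B$ with $B\subseteq A$ satisfies $B\subseteq A_{s'}$ for some $s'\in S$.}
\]
Granting $(\star)$, the lemma follows in two steps. \textbf{Forward.} If $B=A_s$ then (i) is the definition of $A$, (ii) is \cref{lem:disjointRepSets}, and (iii), (iv) hold by the equivalence; for (v), any proper superset $B'\supsetneq A_s$ satisfying (i)--(iv) is, by the equivalence, a representative set with $B'\subseteq A$, so $(\star)$ yields $A_s\subsetneq B'\subseteq A_{s''}$ and hence $A_s\subsetneq A_{s''}$, which is impossible since the non-empty sets $A_{s'}$ are pairwise disjoint; thus no superset even satisfies (i)--(iv), let alone (i)--(v). \textbf{Backward.} If (i)--(v) hold, the equivalence and (i) make $B$ a representative set with $B\subseteq A$, so $(\star)$ gives $B\subseteq A_{s'}$; since $A_{s'}$ itself satisfies (i)--(v) by the already-proven forward direction, $B\subsetneq A_{s'}$ would contradict (v) for $B$, so $B=A_{s'}$, and comparing tips gives $s=s'\in S$, i.e.\ $B=A_s$. (In particular the conditions can be met only when $s\in S$, which is what makes ``$B=A_s$'' meaningful on the left.)

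The main obstacle is $(\star)$: a representative set all of whose leaves lie in $A$ must not ``straddle'' two distinct up-trees. It suffices to show $H_B^{\uparrow}\subseteq H_{A_{s'}}^{\uparrow}$ for a single $s'$, since a leaf of $T$ lying inside an up-tree is automatically a leaf of that up-tree and hence belongs to the corresponding $A_{s'}$. I would argue by contradiction: pick leaves $\ell_1,\ell_2\in B$ in distinct sets $A_{s_1'},A_{s_2'}$ and let $w$ be their least common ancestor, which is fully-branched in $H_B^{\uparrow}$. Since $w$ cannot be a descendant of both $s_1'$ and $s_2'$ (that would place $w$ in two disjoint up-trees), one of them, say $s_2'$, is not an ancestor of $w$; tracking the branching type of $w$ in the up-tree that does contain it and comparing it to its type in $H_B^{\uparrow}$ produces, via the common level $\lvlParity_T\equiv i$ of all roots in $S$ (property (TS1)), a rigid parity constraint. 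Propagating this constraint along the shared part of the two root-to-leaf paths, the pairwise disjointness of the up-trees together with the thinness of $S$ — and this is the natural place for property (TS2), which was not used in \cref{lem:disjointRepSets} — should rule out every remaining configuration. Carefully organising this case analysis on the relative positions of $w$, $s_1'$ and $s_2'$ and the $\lvlParity_T$-parities they force is the delicate part of the argument; the rest is bookkeeping.
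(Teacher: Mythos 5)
Your high-level architecture is sound and, in fact, mirrors the paper's proof: your preliminary equivalence (conditions (ii)--(iv) hold if and only if $B$ is a representative set with tip $s$) is correct, and its single-branched/fully-branched alternation argument is exactly the mechanism the paper uses; granting your claim $(\star)$, your forward and backward derivations are valid and essentially identical to the paper's concluding paragraphs (your handling of the self-referential condition (v), by showing no proper superset satisfies even (i)--(iv), is if anything cleaner than the paper's).

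The genuine gap is that $(\star)$ is the theorem: it is precisely where all the technical work of the paper's proof lives, and you do not prove it --- you sketch a plan and concede that the case analysis is ``the delicate part.'' Worse, the sketch as stated has a hole: after forming $w=\mathrm{LCA}(\ell_1,\ell_2)$ you propose ``tracking the branching type of $w$ in the up-tree that does contain it,'' but nothing guarantees that $w$ lies in \emph{any} up-tree; the whole danger is that $H_B^\uparrow$ contains nodes outside $F^\uparrow:=\bigcup_{s'\in S}H_{A_{s'}}^\uparrow$, and those stray nodes are exactly what must be controlled. The paper does this in two steps. First, it shows $F^\uparrow$ is an \emph{induced} subgraph of $T$ whose components are the individual up-trees: if the root $s''$ of one up-tree were a child in $T$ of a node $u$ of another up-tree $H_{A_{u'}}^\uparrow$, then (TS1) forces $\lvlParity_T(u)\not\equiv\lvlParity_T(u')\pmod 2$, so by (RT3) \emph{all} children of $u$, including $s''$, lie in $H_{A_{u'}}^\uparrow$, contradicting disjointness. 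Second, it picks a node $t$ of $H_B^\uparrow\setminus F^\uparrow$ of \emph{maximum depth}; the alternation along the path $s=p_0,\dots,p_\ell=t$, combined with maximality and the induced-subgraph property, yields that every $B$-leaf below $t$ lies in $A_{t'}$ where $t'$ is the child of $t$ above it and $t'\in S$. Then two cases finish: if $\ell$ is odd, $t$ is fully-branched, so every child of $t$ is in $S$, contradicting (TS2); if $\ell$ is even, $t$ is single-branched and its distinguished child $t'\in S$ is fully-branched by $B$, but (RT2) together with \cref{obs:atLeastTwoChildren} give a child of $t'$ whose subtree contains no leaf of $A_{t'}$ yet must contain a leaf of $B\subseteq A$ --- contradiction. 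Without something playing the role of the maximum-depth choice and the induced-subgraph claim (which is what converts local parity information into membership in $S$, so that (TS2) can bite), your case analysis on $w$, $s_1'$, $s_2'$ has no mechanism to terminate, so $(\star)$ --- and with it the lemma --- remains unestablished.
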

\begin{proof}
    We start by proving parts of the forwards direction of the statement.
    \begin{claim}\label{claim:prop(i)-(iv)}
        For every $s\in S$ the representative set $A_s$ and node $s$ satisfy conditions (i)-(iv).
    \end{claim}
    \begin{claimproof}
        First, note that condition (i) is trivially satisfied. Additionally, (ii) is satisfied by the definition of representative sets.
        By definition $H_{A_s}^{\uparrow}$ is a representative tree. 
        Observe that for any node $t$ of $H_{A_s}^{\uparrow}$ there is some leaf $v$ of $T_t$ that is contained in $H_{A_s}^{\uparrow}$ by the definition of representative trees. In particular, this implies that every node $t$ in $H_{A_s}^{\uparrow}$ with $\lvlParity_T(t)\equiv \lvlParity_T(s)\mod 2$ is either a leaf of $T$ or single-branched by $A_s$ and every node $t$ in $H_{A_s}^{\uparrow}$ with $\lvlParity_T(t)\not\equiv \lvlParity_T(s)\mod 2$ is either a leaf of $T$ or fully-branched.
        Since every inner node $t$ not contained in $H_{A_s}^\uparrow$ is missed by $A_s$, we obtain properties (iii)-(iv). 
        
    \end{claimproof} 
    The essential property we use to prove the equivalence in the statement of the lemma is that the up-tree $H_B^\uparrow$ of any set $B$ satisfying the properties (i)-(iv) together with the tip of $B$ must be a subtree of $H_{A_s}^\uparrow$ for some $s\in S$. We prove this statement in two steps \cref{claim:forestProp} and \cref{claim:subtreeOfUpwardsForest}.
    Let $F^{\uparrow}$ be the forest obtained by taking the union of $H_{A_s}^{\uparrow}$ for all $s\in S$. 
    \begin{claim}\label{claim:forestProp}
        It holds that $F^{\uparrow}$ is an induced subgraph of $T$ with connected components $H_{A_s}^\uparrow$, $s\in S$. 
    \end{claim}
    \begin{claimproof}
        Let $i\in \{0,\dots, 3\}$ such that $i=\lvlParity_T(s)$ for every $s\in S$ (note that $i$ is well defined as $S$ is thin). First note that by assumption the subtrees $H_{A_s}^\uparrow$ are pairwise disjoint and therefore are the connected components of $F^{\uparrow}$.
        Towards a contradiction assume the claim is not true and let $u,s\in V(F^{\uparrow})$ be nodes of $T$ such that $s$ is the child of $u$ in $T$, but the edge $us$ is not in $E(F^{\uparrow})$. In this case $s$ is the root of some tree of $F^{\uparrow}$ and in particular $s\in S$. Additionally, $u$ is a node in some tree $H_{\uparrow}^{u'}$ with $u'\in S$ and $u'\not= s$. 
        In particular, $\lvlParity_T(s)=\lvlParity_T(u')=i$. Since $u$ is the parent of $s$ this implies that $\lvlParity_T(u')\not\equiv \lvlParity_T(u)\mod 2$. Since additionally $H_{\uparrow}^{u'}$ is a representative tree, all children of $u$ must 
        be contained in $H_{\uparrow}^{u'}$. This contradicts the disjointness of representative trees, since $s$ is a child of $u$ and is contained in $H_{\uparrow}^s$.
    \end{claimproof}

    \begin{claim}\label{claim:subtreeOfUpwardsForest}
        For any set $B$ with tip $s$ such that $B$ and $s$ satisfy properties (i)-(iv) the tree $H_B^{\uparrow}$ does not contain any node which is not contained in $F^{\uparrow}$.
    \end{claim}
    \begin{claimproof}
        Our argument essentially amounts to showing that there are not enough leaves from $A$ in any subtree of $T$ with root in $V(H_B^{\uparrow})\setminus V(F^{\uparrow})$ in order for $B$ to satisfy properties (i)-(iv). See \cref{fig:claim14} for an illustration. 
        To argue this formally, let $t$ be a node which is contained in $H_B^{\uparrow}$ but not in $F^{\uparrow}$ and has maximum depth with this property. 
        Let $(p_0,p_1,\dots, p_{\ell})$ with $p_0=s$ and $p_{\ell}=t$ be the path in $H_B^{\uparrow}$ from $s$ to $t$. First observe that no node of $p_0,\dots, p_{\ell}$ can be missed by $B$ as they are contained in $H_B^{\uparrow}$ which implies that at least one descendant must be contained in $B$ for each of $p_0,\dots, p_{\ell}$. Property (ii) implies that $p_0=s$ is single-branched by $B$. By property (iv) and since $p_1$ cannot be missed or be a leaf, we conclude that $p_1$ is fully-branched. By property (iii) and because $p_2$ cannot be a leaf, this implies that $p_2$ is single-branched by $B$. Using a recursive argument, we conclude that $p_{i}$ must be fully-branched by $B$ if $i$ is odd and $p_i$ is single-branched by $B$ if $i$ is even for every $i\in \{0,\dots, \ell\}$. 

        We can further argue the following property:
        \begin{align*}
            (\ast) \quad \phantom{ii}&\text{For every leaf $u\in A_{t'}\cap B$ it holds that $t'$ is a child of $t$. } 
        \end{align*} 
       
        \begin{figure}
        \centering
    \includegraphics[width=\textwidth]{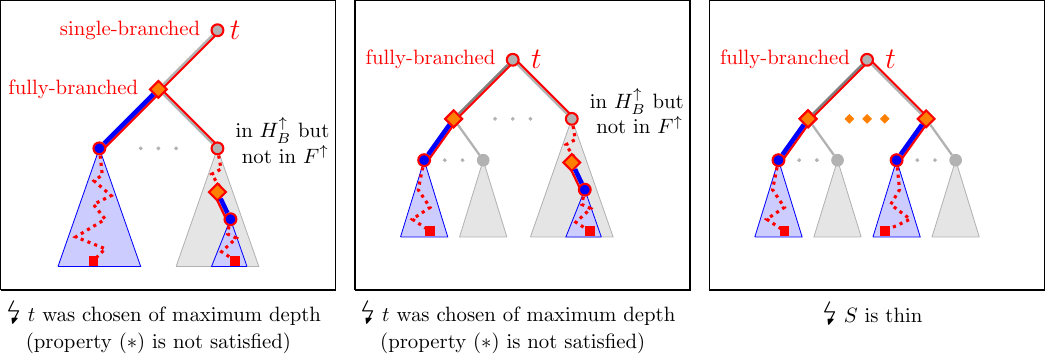}%
    \caption{%
         The different cases to obtain a contradiction in \cref{claim:subtreeOfUpwardsForest}. Here larger orange diamonds are nodes in $S$, representative trees are highlighted in blue (in particular $F^\uparrow$ is blue), parts of the tree belonging to $H_B^\uparrow$ are highlighted in red and the red squares are leaves in $B$.
    }%
    \label{fig:claim14}
\end{figure}
        To see this, observe that for any node $t'\in S$ and $u\in A_{t'}\cap B$ the path between $t'$ and $u$ must be fully contained in $F^\uparrow$. Since $t$ is not contained in $F^\uparrow$,  we conclude that $t'$ must be a proper descendant of $t$. 
        Now, let $t'\in S$ be any node that is contained in the subtree $(H_B^{\uparrow})_{t}$ of $H_B^{\uparrow}$ rooted at $t$. By \cref{claim:forestProp} we have that $H_{A_{t'}}^\uparrow$ is a connected component of $F^\uparrow$. Since $F^\uparrow$ is an induced subgraph of $T$ by \cref{claim:forestProp}, this implies that the parent $t''$ of $t'$ cannot be contained in $F^\uparrow$. On the other hand, $t''$ must be contained in $H_B^\uparrow$ as $t$ is in the tree $H_B^\uparrow$ and $t$ is an ancestor of $t''$. Since we choose $t$ of maximum depth with the property of being in $H_B^\uparrow$ but not in $F^\uparrow$ this implies that $t''=t$. Therefore, every node in $(H_B^{\uparrow})_{t}$ that is contained in $S$ must be a child of $t$ and hence property $(\ast)$ holds. 
       We now distinguish two cases. 

        First assume that $\ell$ is odd and therefore $t=p_\ell$ is fully-branched (see right two part of \cref{fig:claim14}). Hence, every child of $t$ must be the ancestor of a node in $B\subseteq A$. By property $(\ast)$, every child of $t$ must be contained in $S$. But this contradicts the fact that $S$ is thin.

        On the other hand, assume $\ell$ is even and, therefore, $t=p_\ell$ is single-branched (see left part of \cref{fig:claim14}). Let $t'$ be the child of $t$ that is either a leaf or fully branched, and which exists due to (iv). By statement $(\ast)$, we know that $t'\in S$ and therefore cannot be a leaf as $S$ only contains inner nodes of $T$. Because $t'\in S$ we know that $H^\uparrow_{A_{t'}}$ is a representative tree and therefore $t'$ has only one child contained in $H^\uparrow_{A_{t'}}$ by (RT2). By \cref{obs:atLeastTwoChildren} $t'$ has at least two children and hence there is a child $t''$ of $t'$ such that $T_{t''}$ contains no leaves from $A_{t'}$. On the other hand, by (iv) $t'$ is fully-branched by $B$ (as $t$ is single-branched) and therefore there must be a leaf $u\in B\subseteq A$ in the subtree $T_{t''}$. In particular, $u\notin A_{t'}$. But this contradicts the property $(\ast)$ as $t'$ is the child of $t$ that is an ancestor of $u$.  
    \end{claimproof}
    Using the above claims we can now conclude the proof as follows. 
    First assume $B=A_s$ for some $s\in S$. By \cref{claim:prop(i)-(iv)}, $B$ and $s$ satisfy (i)-(iv). Presume that $B$ does not satisfy (v) and $B'\supset B$ and node $s'$ satisfy properties (i)-(iv). Combining \cref{claim:subtreeOfUpwardsForest} and \cref{claim:forestProp}, we know that $H_{B'}^\uparrow$ must be a subtree of a connected component of $F^{\uparrow}$. More specifically, $H_{B'}^\uparrow$ must be a subtree of $H_{B}^\uparrow=H_{A_s}^\uparrow$. This implies that $B'\subseteq B$ which contradicts the assumption that $B'$ is a proper superset of $B$.
    
    On the other hand, assume $B\subseteq A$ and node $s$ satisfy conditions (i)-(iv). By \cref{claim:subtreeOfUpwardsForest} and \cref{claim:forestProp}, $H_B^{\uparrow}$ is a subtree of a connected component of $F^{\uparrow}$ which  means that $H_{B}^\uparrow$ is a subtree of $H_{A_{s'}}^\uparrow$ for some $s'\in S$.  In particular, this implies that $B\subseteq A_{s'}$. Finally, $B$ must in fact be equal to $A_{s'}$ because otherwise $A_{s'}$ is a proper superset of $B$ which together with $s'$ satisfies properties (i)-(iv) by \cref{claim:prop(i)-(iv)}. Additionally, $s=s'$ because $s$ is the tip of $B$ while $s'$ is the tip of $A_{s'}=B$ and hence $B=A_s$. 
\end{proof}

\subsection{The colouring}
In this subsection, we describe the colouring we use for our transduction and introduce the predicate we obtain that identifies representative sets for all inner nodes. For this, we first show that we can partition the nodes of every laminar tree into a finite number of thin sets. 

\begin{lemma}\label{lem:partitionThinSets}
    Let $T$ be the laminar tree of a set system $(U,\mathcal{F})$ and $L$ the set of leaves of $T$. We can partition the set of inner nodes $V(T)\setminus L$ into $16$ (possibly empty) thin sets.
\end{lemma}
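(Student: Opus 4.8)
My plan is to reduce the three thinness conditions to purely local statements about the tree and then realise them with a single product colouring. Condition (TS1) forces every thin set to live inside one \emph{depth class} $D_i := \{\, v \in V(T)\setminus L : \lvlParity_T(v) = i \,\}$, so I would first split $V(T)\setminus L$ into $D_0,D_1,D_2,D_3$ and then cut each $D_i$ into four thin pieces, for $16$ in total. The key reformulation is that, for a set $S \subseteq D_i$, the remaining conditions unwind (modulo the stated root exceptions) into two statements of the same shape: (TS2) is equivalent to ``no inner node has \emph{all} of its children in $S$'', and (TS3) is equivalent to ``no inner node has \emph{all} of its children in $P$'', where $P$ is the set of parents of nodes of $S$. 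Both say ``every node keeps at least one child outside a prescribed set'', one at the level of $S$ and one a level higher, and \cref{obs:atLeastTwoChildren} (every inner node has at least two children) is what makes such a child available.

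To meet both simultaneously I fix, once and for all, an arbitrary \emph{distinguished child} of every inner node and set $m(v) = 1$ if $v$ is the distinguished child of its parent and $m(v) = 0$ otherwise. I then colour each inner node $v$ by the triple $c(v) = \big(\lvlParity_T(v),\, m(v),\, m(\parentT(v))\big)$. This uses $4 \times 2 \times 2 = 16$ colours, and the claim is that each colour class $S_{(i,a,b)}$ is thin with witness $i$ (so (TS1) is immediate).

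The verification is a short case analysis. For (TS2): two siblings share their parent, hence share the first and third coordinates $i$ and $b$, so a sibling of $s$ lies outside $S_{(i,a,b)}$ exactly when its $m$-value differs from $a$. If $a=1$ then $s$ is the distinguished child of its parent and every other child carries $m=0$; if $a=0$ then the distinguished child of its parent carries $m=1$. Either way, using that the parent has at least two children, a sibling outside $S$ exists. For (TS3): let $p = \parentT(s)$ and let $g$ be the parent of $p$ (which exists unless $s$ is the root or a child of the root, both covered by the exceptions). If $b = m(p) = 1$, then any non-distinguished child $q$ of $g$ has $m(q)=0$, so all children of $q$ receive third coordinate $0 \ne b$ and therefore avoid $S$; if $b=0$, the distinguished child $q$ of $g$ has $m(q)=1$ and again none of its children lie in $S$. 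In both cases $q \ne p$ is a sibling of $p$ with no child in $S$, i.e.\ the required sibling outside $P$; existence of such a $q$ uses once more that $g$ has at least two children.

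The only genuinely delicate point is the behaviour at the top of the tree, which is exactly why the definition of thin carries root exceptions. I would handle it by convention: set $m(r)=1$ for the root $r$ (so that the children of $r$ receive third coordinate $1$) and place $r$ itself in the class $(0,1,1)$; for $r$ condition (TS2) is waived and (TS3) is vacuous since $r$ has no parent. Children of the root then land in classes $(1,\cdot,1)$ and are exempt from (TS3), so only (TS2) is needed for them and it holds as above. For grandchildren of the root the relevant $g$ is the root itself, and since the root has at least two children the (TS3) argument goes through verbatim. I expect this root bookkeeping, together with pinning down the precise equivalent local forms of (TS2) and (TS3), to be the main thing to get right; once those are in hand the colouring and its verification are routine.
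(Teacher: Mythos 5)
Your proof is correct and is essentially the paper's own argument: the paper likewise splits the inner nodes by depth parity into the four classes $V^0,\dots,V^3$ and then refines each into four parts according to whether a node is the ``first'' (distinguished) inner child of its parent and whether its parent is the ``first'' relevant child of its grandparent --- precisely your product colouring $\big(\lvlParity_T(v),\,m(v),\,m(\parentT(v))\big)$, verified via the same appeal to \cref{obs:atLeastTwoChildren} and the same root exceptions. The only cosmetic difference is packaging: you use a single product colouring with an arbitrarily chosen distinguished child per inner node, while the paper presents it as a two-step refinement $(P^1,P^2)\rightarrow(S^1,S^2,S^3,S^4)$ in which the distinguished child is chosen among specific (inner, respectively grandchild-bearing) children.
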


\begin{proof}
    For every $i\in \{0,1,2,3\}$, let $V^i\subseteq V(T)\setminus L$ be the set of inner nodes with $\lvlParity_T(t)=i$. We further denote the root of $T$ by $r$ and the children of $r$ by $s_1,\dots, s_\ell$. We show that for every $i\in \{0,1,2,3\}$ we can partition $V^i$ into $4$ thin sets. 
    
    Hence fix $i\in \{0,1,2,3\}$ and let $j_1\equiv i-1\mod 4$ and $j_2\equiv i-2\mod 4$.
    We first partition $V^i$ into two sets $P^\LHS,P^\RHS$ as follows. For every node $t\in V^{j_1}$ (these are the parents of nodes in $V^i$) consider the children $t_1,\dots, t_k$ of $t$ which are contained in $V^i$. Note that every child of $t$ is either a leaf of $T$ or contained in $V^i$. We add $t_1$ to $P^\LHS$ and $t_2,\dots, t_k$ to $P^\RHS$. In case $t$ has only one child in $V^i$ (all other children are leaves of $T$), no node gets added to $P^\RHS$ and in case $t$ has no children in $V^i$ (all children of $t$ are leaves of $T$), neither $P^\LHS$ nor $P^\RHS$ receive an additional node. We further add $r$ to $P^\LHS$ if $i=0$ ($r$ is the only node in $V^i$ with no parent in $V^{j_1}$). By construction, the partition $(P^\LHS,P^\RHS)$ of $V^i$ has the property that every node in $P^\LHS$ has a sibling which is not contained in $P^\LHS$ with the exception of $r$ in case $i=0$ and the same holds for $P^\RHS$. For this we crucially rely on \cref{obs:atLeastTwoChildren} and additionally observe that the sibling of some node $t\in P^\LHS$, which is not contained in $P^\LHS$, can either be a leaf or contained in $P^\RHS$ and similarly for $P^\RHS$.

    We now refine the partition $(P^\LHS,P^\RHS)$ of $V^i$ into a partition $(S^\LL,S^\LR,S^\RL,S^\RR)$ as follows. For every node $t$ in $V^{j_2}$ (these are grandparents of nodes in $V^i$) consider the children $t_1,\dots, t_k$ of $t$ that have at least one child in $V^i$ ($k$ could be $0$ if all children of $t$ are either leaves or only have leaves as children). For $t_1$ we put all children contained in $P^\LHS$ into $S^\LL$ and all children contained in $P^\RHS$ into $S^\LR$. Similarly, we put all children of $t_2,\dots, t_k$ which are contained in $P^\LHS$ into $S^\RL$ and all children of $t_2,\dots, t_k$ that are contained in $P^\RHS$ into $S^\RR$. Finally, $r$ gets added to $S^\LL$ in case $i=0$ and all children among $s_1,\dots, s_\ell$ of $s$ contained in $P^\LHS$ get  added to $S^\LL$ while all children of $s$ contained in $P^\RHS$ get added to $S^\LR$ in case $i=1$ ($s,s_1,\dots, s_\ell$ are all the nodes in $V^i$ that cannot have a grandparent in $V^{j_2}$). For an illustration of the construction, see \cref{fig:partition}. We now argue that each set of the partition is thin.
    
    Clearly $(S^\LL,S^\LR,S^\RL,S^\RR)$ is a refinement of $(P^\LHS,P^\RHS)$ and therefore still satisfies the condition that every node in $S^
    \ell$ has a sibling that is not contained in $S^\ell$, with the exception of $r\in S^\LL$ in the case $i=0$.
    Hence, $S^\LL,S^\LR,S^\RL$ and $S^\RR$ each satisfy condition (TS2). Furthermore, by construction, we also satisfy that the parent $p_t$ of every node $t\in S^\LL$ has a sibling which is not contained in $S^\LL$ with the exception of $r$ and $s_1,\dots, s_\ell$ and the same holds for $S^\LR,S^\RL$ and $S^\RR$. To see this, consider $t\in S^\LL\cup S^\LR$, the parent $p_t$ of $t$ and the grandparent $q_t$ of $t$. By \cref{obs:atLeastTwoChildren} $q_t$ must have an additional child $p\not=p_t$. This additional child $p$ is either a leaf (in which case none of $p$'s children are contained in $S^\LL$ or $S^\LR$) or by construction all non-leaf children of $p$ are contained in $S^\RL$ and $S^\RR$. Now consider $t\in S^\RL\cup S^\RR$, the parent $p_t$ of $t$ and the grandparent $q_t$ of $t$. By construction all non-leaf children of $t$ are in $S^\RL\cup S^\RR$.
    Therefore, by construction, $q_t$ must have a child $p\not=q_t$ (the "first" child in our construction) for which all non-leaf children are contained in $S^\LL\cup S^\LR$. Hence, each part $S^\LL,S^\LR,S^\RL,S^\RR$ also satisfies (TS3). Finally, (TS1) is trivially satisfied for each part $S^\LL,S^\LR,S^\RL,S^\RR$ and they are therefore thin sets.
\end{proof}
\begin{figure}
    \includegraphics[width=\textwidth]{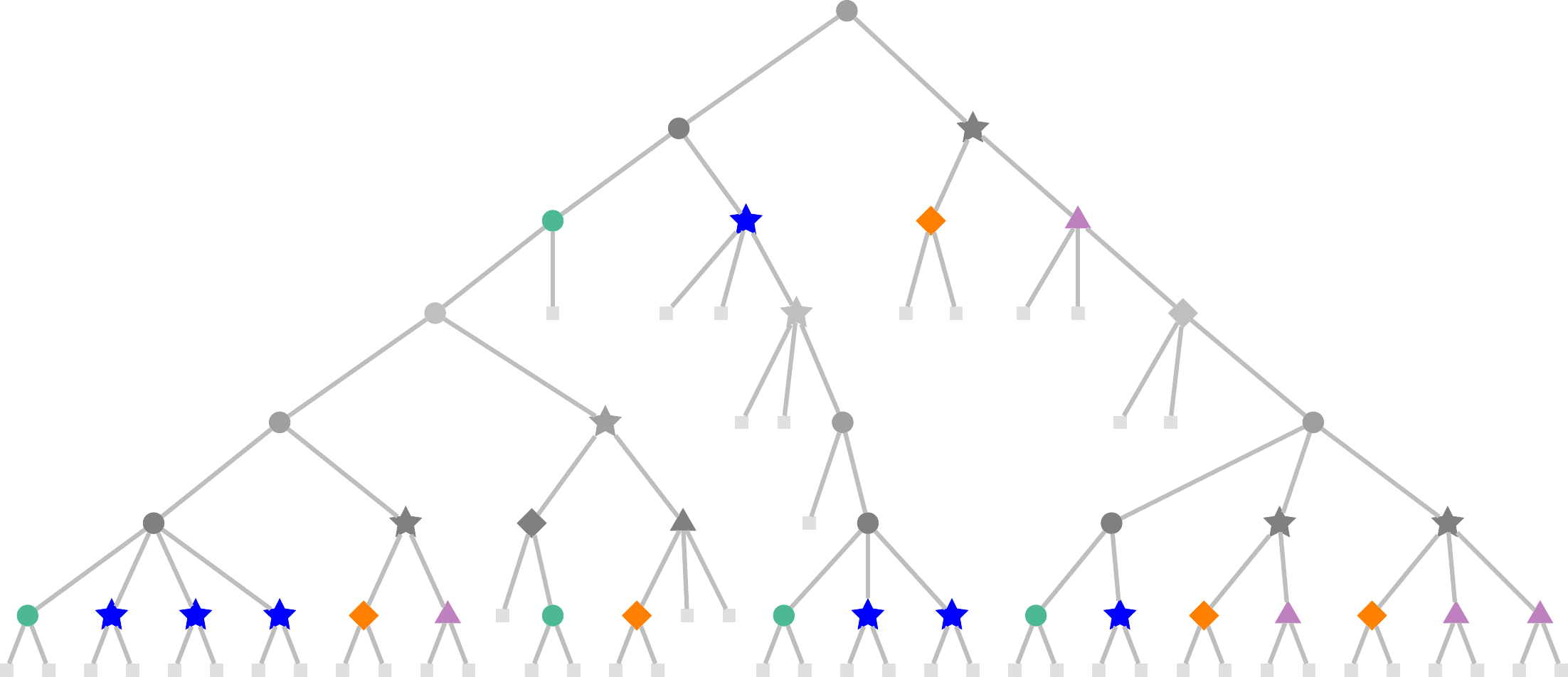}%
    \caption{%
        A partition of the inner nodes of a laminar tree into thin sets where for $i=2$ the partition $(S^\LL,S^\LR,S^\RL,S^\RR)$ of $V^i$ is highlighted in colour and the other parts of the partition into thin sets are hinted at using different shapes and shades of gray. %
    }%
    \label{fig:partition}
\end{figure}
 
Let $(U, \mathcal{F})$ be a set system and $T$ the laminar tree of $(U,\mathcal{F})$ where we denote the set of leaves of $T$ by $L$. By \cref{lem:partitionThinSets} we can partition the inner nodes $V(T)\setminus L$ of $T$ into at most $16$ (potentially empty) thin sets $S^1,\dots, S^{16}$ that correspond to the sets $S^\LL,S^\LR,S^\RL,S^\RR$ for each of the levels $V^0,V^1,V^2,V^3$. We call such a partition of $V(T)\setminus L$ a \emph{thin partition} for $(T,\mathcal{F})$. 
Our colouring of elements of $U$ consists of $32$ colours $A^1,\dots, A^{16},B^1,\dots,B^{16}$.
Here $A^i$ is the union of representative sets $(A^i_s)_{s\in S^i}$ chosen to satisfy \cref{lem:disjointRepSets} for every $i\in [16]$ and each $B^i$ contains exactly one (arbitrary) element per representative set $A^i_s$, $s\in S^i$. We call such a $32$-colouring of $U$ an \emph{identifying colouring} for the thin partition $(S^1,\dots, S^{16})$. An \emph{identifying colouring} is simply an identifying colouring obtained from any partition $(S^1,\dots, S^{16})$ of $V(T)\setminus L$ into thin sets. \\

We now show how an identifying colouring can be used to identify in \MSO one representative element of $U$ for each inner node of the laminar tree of a laminar set system $(U,\mathcal{F})$. For this, we use representative sets and then the arbitrarily chosen element from each representative set as the \emph{leader} which will play the role of the representative element. Before we proceed, let us first provide some auxiliary predicates.

For a set system $(U,\mathcal{F})$, we define $\rootNode(X)$ and $\leaf(X)$ to be a predicates with one free set variable expressing that $X$ corresponds to the root (or leaf, resp.) of the laminar tree of $(U,\mathcal{F})$ which is easy to implement in \MSO. Furthermore, for a unary predicate $A$ we additionally define a predicate with one free set variable $\leaf_A(X)$ that requires $X$ to correspond to a leaf of the laminar tree and the element contained in $X$ additionally needs to satisfy predicate $A$. 
We further introduce predicates $\ancestor(X,Y)$, $\descendant(X,Y)$, $\parent(X,Y)$ and $\child(X,Y)$ each having two free set variables $X,Y$ and which are satisfied if $X,Y\in \mathcal{F}$ and the node corresponding to $X$ in the laminar tree of $(U,\mathcal{F})$ is the ancestor, descendant, parent or child, respectively, of the node corresponding to $Y$.  It is routine to implement these four predicates in \MSO.

The following lemma implements predicates that provide the identification between sets in $\mathcal{F}$ and elements in $U$ in such a way that all elements in $A_i$, $i\in \{1,\dots, 16\}$ receive distinct representatives which are exactly the leaders of their respective representative sets. We use one leader predicate $\leader_i$ for every $i\in \{1,\dots, 16\}$ which provides the leaders for sets corresponding to nodes in $A_i$. We use an additional predicate $\leader_0$ which gives the identification between singleton sets and elements. While this is trivial, it is convenient later to avoid exceptions for singleton sets.
\begin{lemma}\label{lem:leaderFormula}
    Let $(U,\mathcal{F})$ be a laminar set system, $(S^1,\dots, S^{16})$ a thin partition for $(U,\mathcal{F})$ and $A^1,\dots, A^{16},B^1,\dots,B^{16}$ an identifying colouring for $(S^1,\dots,S^{16})$. 
    Using the unary predicates $A^1,\dots, A^{16},B^1,\dots,B^{16}$ we can define binary predicates $\leader_i(r,X)$ for every $i\in \{0,\dots,16\}$ such that for every $X\in \mathcal{F}$ there exists exactly one $i\in \{0,\dots,16\}$ and one $r\in U$,
    the leader of $X$, for which $\leader_i(r,X)$ is satisfied. Furthermore, for every $i\in \{0,\dots, 16\}$ the elements $r\in U$ for which there is a set $X\in \mathcal{F}$ such that $\leader_i(r,X)$ is satisfied are pairwise different.
\end{lemma}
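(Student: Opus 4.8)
The plan is to define the leader predicates by cases on which thin set a node belongs to, using the characterization of representative sets from \cref{lem:uniqueRepSets}. For each $i\in\{1,\dots,16\}$, I would first write an \MSO-formula $\repSet_i(s,B)$ (with one free element-like role and one free set variable $B$) that captures ``$B$ is the representative set $A^i_s$ of the node $s\in S^i$''. Since the input to the transduction is only the set system, a node $s$ of the laminar tree is not directly available as an element; instead I work with the set $X\in\mathcal{F}$ corresponding to $s$, so $s$ is represented by $X$ and the tip condition becomes a statement relating $X$ and $B$ via the already-defined predicates $\parent$, $\child$, $\ancestor$ and $\leaf$. Conditions (i)-(v) of \cref{lem:uniqueRepSets} translate directly: (i) says $B\subseteq A^i$ (using the colour predicate $A^i$); (ii) says $X$ is the tip of $B$, i.e.\ $X$ corresponds to the parent of the least common ancestor of $B$, which is \MSO-expressible since ``least common ancestor'' and ``tip'' are definable from $\descendant$; (iii)-(iv) are the fully/single-branched alternation, each a bounded quantification over children ($\child(Y,X)$) asserting the $\single$/$\fully$/$\missed$ status relative to $B$; and (v) is a maximality clause quantifying over supersets $B'$ and nodes $s'$, which is again \MSO since properties (i)-(iv) are already \MSO-formulas.

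Next I would use the colour $B^i$ to pin down the leader itself. Recall $B^i$ contains exactly one element per representative set $A^i_s$. So I would define $\leader_i(r,X)$ to assert: there exists a set $B$ with $\repSet_i(X,B)$, the element $r$ lies in $B$, and $\{r\}$ (the singleton/leaf corresponding to $r$) satisfies $\leaf_{B^i}$, i.e.\ $r\in B^i$. The case $i=0$ handles the leaves: $\leader_0(r,X)$ asserts $\leaf(X)$ and $r$ is the unique element with $\{r\}=X$, for which identifying a realization is trivial as noted before the representative-set machinery. Stitching the cases together, every $X\in\mathcal{F}$ is either a leaf (handled by $i=0$) or an inner node lying in exactly one thin set $S^i$ (handled by $i$ via the thin partition), giving existence and uniqueness of the pair $(i,r)$.

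The two properties to verify are then: (a) for each fixed $X$ there is exactly one $(i,r)$, and (b) for each fixed $i$ the leaders of distinct sets are distinct. For (a), uniqueness of $i$ follows because $(S^1,\dots,S^{16})$ is a partition, so an inner node lies in exactly one part, and no inner node is a leaf; uniqueness of $B$ given $i$ and $X$ follows from \cref{lem:uniqueRepSets} (the representative set of a node is uniquely determined by the colour class $A^i$), and uniqueness of $r$ within $B$ follows because $B^i$ selects exactly one element of each representative set $A^i_s$. For (b), distinctness of leaders across different sets $X,X'$ in the same thin set $S^i$ follows from \cref{lem:disjointRepSets}: the representative trees $H^\uparrow_{A^i_s}$ are pairwise disjoint, hence their leaf-sets $A^i_s$ are pairwise disjoint, so the single $B^i$-element chosen from each lies in a distinct set and is therefore a distinct element of $U$.

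The main obstacle I anticipate is the faithful \MSO-rendering of the ``tip'' and the least-common-ancestor notions purely over the $\{\SET\}$-vocabulary, together with checking the maximality clause (v) without circularity. Since (v) refers back to properties (i)-(iv), I must be careful that these are genuine \MSO-subformulas (with the node parameter instantiated by the corresponding set variable) so that the superset-quantification in (v) stays within \MSO and does not implicitly require quantifying over the auxiliary tree structure that is not yet available; \cref{lem:uniqueRepSets} guarantees the logical equivalence, so the remaining work is the routine but delicate bookkeeping of encoding parent/child/ancestor and the branched-status predicates through the subset relation on $\mathcal{F}$.
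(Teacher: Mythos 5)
Your proposal is correct and follows essentially the same route as the paper's proof: represent tree nodes by their corresponding sets in $\mathcal{F}$, render the conditions of \cref{lem:uniqueRepSets} as \MSO-formulas using auxiliary $\child$/$\descendant$/$\ancestor$ predicates together with fully-branched/single-branched/missed subformulas and a maximality clause for (v), select the leader via the colour $B^i$, handle leaves by a trivial $\leader_0$, and derive uniqueness and pairwise distinctness from \cref{lem:uniqueRepSets} and \cref{lem:disjointRepSets}. The only cosmetic divergences are that the paper writes its branching predicates relative to the colour class $A^i$ rather than the candidate set $B$, and implements (v) by quantifying over ancestors of $X$ rather than over supersets of $B$; your rendering is, if anything, the more literal translation of \cref{lem:uniqueRepSets}.
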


\begin{proof}
    Our goal is to define the conditions of \cref{lem:uniqueRepSets} in \MSO. For this, note that instead of using nodes of the laminar tree, which we do not have access to, in the definition of the sentence $\leader_i(r,X)$, we rely on the fact that each node corresponds to a unique set in $\mathcal{F}$ and use sets of $\mathcal{F}$ instead of nodes. We \MSO define the leader in this way for every inner node of the laminar tree. For the leaves which correspond to all singleton sets in $\mathcal{F}$ we additionally define the trivial predicate $\leader_0(r,X):=\leaf(X)\land r\in X$ which assigns every singleton $\{r\}$ the leader $r$.\\
    
    To implement \cref{lem:uniqueRepSets} in \MSO we first require predicates that define when a set of $\mathcal{F}$ is fully branched, single branched or missed by the set defined by any of the predicates $A^i$. Hence, fix any unary predicate $A$ representing a subset of $U$. For any $A\in \{A^1,\dots, A^{16}\}$ we define three predicates with one free set variable $X$ each expressing that $X\in \mathcal{F}$ and the node corresponding to $X$ in the laminar tree of $(U,\mathcal{F})$ is fully branched, single branched or missed by $A$ as follows: 

\begin{align*}
    \fully_A(X):=&\SET(X)\land \lnot \leaf(X)\land \forall Y \Big(\child(Y,X)\rightarrow \exists y\big(y\in Y\land A(y)\big)\Big),\\
    \single_A(X):=&\SET(X)\land \lnot \leaf(X)\land \exists Y \Big(\child(Y,X)\land \exists y\big(y\in Y\land A(y)\big)\land \\
    &\phantom{aaaaaaaa}\forall Z \Big[\big(\child(Z,X)\land \exists z\big(z\in Z\land A(z)\big)\big)\rightarrow Z=Y\Big]\Big),\\
    \missed_A(X):=& \SET(X)\land \lnot \leaf(X)\land \forall x \big(x\in X\rightarrow \lnot A(x)\big).
\end{align*}
Since $A^1,\dots, A^{16}, B^1,\dots, B^{16}$ is an identifying colouring for $(S^1,\dots, S^{16})$ each $A^i$ is the union of representative sets $(A^i_s)_{s\in S^i}$ with pairwise disjoint up-trees $H^\uparrow_{A^i_s}$. Hence, by \cref{lem:uniqueRepSets} any set $R\subseteq U$ and set $X$ corresponding to inner node $s$ of the laminar tree satisfy $R=A_s^i$ if and only if they satisfy predicate $\repSet_A^\ast(R,X)$ defined below. Note that we first define the following auxiliary predicate $\repSet_A(R,X)$ which implements precisely conditions (i)-(iv) of \cref{lem:uniqueRepSets} and then $\repSet_A^\ast(R,X)$ extends $\repSet_A(R,X)$ to also include property (v). This is needed because property (v) states that set $R$  is  maximal among all sets for which there is a node and they satisfy (i)-(iv). We define $\repSet_A(R,X)$ as follows:
\begin{align*}
    \repSet_A(R&,X):=\SET(X)\land \\
    &\forall x(x\in R\rightarrow A(x))\land &&\textbf{\textcolor{black!90}{(i)}}\\
    &\exists Y \Big(\child(Y,X) \land R\subseteq Y\land \forall Y'\big[\big(\descendant(Y',X)\land R\subseteq Y'\big)\rightarrow Y'=Y\big] \Big)\land &&\textbf{\textcolor{black!90}{(ii)}}
    \\
    &\forall Y\Big(\big(\descendant(Y, X)\land \fully_A(Y)\big)\rightarrow \forall Z\Big[\child(Z,Y)\rightarrow \big(\leaf_A(Z)\lor \single_A(Z)\big)\Big]\Big)\land &&\textbf{\textcolor{black!90}{(iii)}}\\
    &\forall Y\Big(\big(\descendant(Y, X)\land \single_A(Y)\big)\rightarrow \exists Z\Big[\child(Z,Y)\land \big(\leaf_A(Z)\lor \fully_A(Z)\big)\land \\
    &\phantom{aaaaaaaaaaaaaaaaaaaaaaa}\forall Z'\big((\child(Z',Y)\land Z'\not= Z)\rightarrow \missed_A(Z')\big)\Big]\Big). &&\textbf{\textcolor{black!90}{(iv)}}
\end{align*}
Note that each line of the formula above expresses  precisely one condition from \cref{lem:uniqueRepSets} (excluding (v)) and we additionally have to ensure that $X$ corresponds to a node and therefore is in $\mathcal{F}$. We define $\repSet^\ast_A(R,X)$ as follow: 
\begin{align*}
    \repSet_A^\ast(R,X):=\repSet_A(R,X)\land \forall X'\forall R'\Big(\big(\ancestor(X',X)\land &\repSet_A(R',X')\big)\rightarrow \\
    &\big(R'=R\land X'=X\big)\Big).&&\textbf{\textcolor{black!90}{(v)}}
\end{align*}
For a fixed $i\in [16]$ \cref{lem:uniqueRepSets} implies that for every $X\in \mathcal{F}$ corresponding to node $s$ of the laminar tree there is a unique set $R$, namely $R=A^i_s$, for which $\repSet^\ast_{A^i}(R,X)$ is satisfied. Note that since for every $i\in [16]$ the up-trees $H^\uparrow_{A^i_s}$ are pairwise disjoint for all $s\in S^i$ the representative sets $(A^i_s)_{s\in S^i}$ must also be pairwise disjoint. 
Hence, by choosing an arbitrary element in $A^i_s$ as the leader, as the unary predicates $B^1,\dots,B^{16}$ do,  we obtain that each set $X\in \mathcal{F}$ obtains a unique element as leader (its precisely the element contained in $A^i_s$ which satisfies predicates $B^i$).
Hence, for $i\in [16]$ we can define the predicate $\leader_i(r,X)$ as follows: 
\begin{align*}
    \leader_i(r,X):=B^i(r)\land \exists R\big(r\in R\land  \repSet_{A_i}^\ast(R,X)\big).
\end{align*}
Note that since for fixed $i\in [16]$ the $A^i_s$ are pairwise disjoint, it is also guarantied that for any fixed $i\in [16]$ all leaders of nodes in $A^i$ are pairwise different. This concludes the proof of the lemma.
\end{proof}

\subsection{The transduction}
We have introduced all the tools to provide our transduction. We recall \cref{thm:main} here in a slighly different version. We note however that by a simple interpretation which forgets the $\SET$ predicate, we can obtain the original version.
\begingroup
\def\thetheorem{\ref{thm:main}}
\begin{theorem}
There is an overlay \MSO-transduction $\tau$ from $\{\SET\}$-structure to $\{\descendantT, \SET\}$-structures such that for any laminar set system $(U,\mathcal{F})$ represented by the $\{SET\}$-structure $\mathbb{F}$ the image of $\tau$ is the $\{\descendantT, \SET\}$-structure $\mathbb{F}\sqcup \mathbb{T}$ where $\mathbb{T}$ is the $\{\descendantT\}$-structure which represents the laminar tree $T$ of $(U,\mathcal{F})$.
\end{theorem}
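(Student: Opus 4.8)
The plan is to assemble $\tau$ as a composition of the four kinds of atomic \MSO-transductions, using the leader predicates of \cref{lem:leaderFormula} as the engine that assigns to each node of the laminar tree a concrete element of the (copied) universe to serve as its realization. First I would prepend a filtering transduction whose sentence is the $\{\SET\}$-formula defining laminarity displayed in \cref{sec:prelim}; this discards all non-laminar inputs so that on every surviving structure $\mathbb{F}$ we may speak of its laminar tree $T$ and its leaf set $L\cong U$, and so that the auxiliary predicates $\leaf$, $\child$, $\descendant$ behave as intended. Next I would apply a $32$-colouring transduction introducing the unary predicates $A^1,\dots,A^{16},B^1,\dots,B^{16}$; this is the single non-deterministic step, guessing an identifying colouring.

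I then filter with a sentence $\chi_{\mathrm{col}}$ that, using the \MSO-definable predicates $\leader_i(r,X)$ from \cref{lem:leaderFormula}, asserts exactly that the guessed colouring yields a correct leader assignment: namely that (a) for every $X$ with $\SET(X)$ there is exactly one pair $(i,r)$ with $i\in\{0,\dots,16\}$ and $r\in U$ satisfying $\leader_i(r,X)$, and (b) for each fixed $i$ the element $r$ is determined injectively by $X$. Both clauses are \MSO-expressible because the $\leader_i$ are, and since $i$ ranges over a fixed finite set the quantification over $i$ unfolds into a finite boolean combination. Having fixed a good colouring, I would apply a $17$-copying transduction producing copies indexed by $\{0,\dots,16\}$ with predicates $\copyV_i(x,y)$.

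The final atomic transduction is an interpretation $(\phi,\psi_{\descendantT},\psi_{\SET})$. Here $\phi(x):=\bigvee_{i=0}^{16}\exists y\,\exists X(\copyV_i(x,y)\land \leader_i(y,X))$ selects as the output universe exactly the realizations: the $i$-th copy of $y$ survives precisely when $y$ is the leader, for index $i$, of some set $X$. Because $\leader_0(r,X):=\leaf(X)\land r\in X$ assigns each singleton its own original element at copy index $0$, the leaves of the output are literally the original elements of $U$, which is what makes $\tau$ an overlay transduction. The descendant relation is read off the subset relation: $\psi_{\descendantT}(x_1,x_2)$ asserts, over a finite disjunction of index pairs, the existence of $y_1,y_2,X_1,X_2$ with $\copyV_{i_1}(x_1,y_1)$, $\leader_{i_1}(y_1,X_1)$, $\copyV_{i_2}(x_2,y_2)$, $\leader_{i_2}(y_2,X_2)$ and $X_1\subseteq X_2$, which holds iff the node realized by $x_1$ is a descendant of the node realized by $x_2$. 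Finally $\psi_{\SET}(Y):=\big(\forall z(z\in Y\rightarrow \copyV_0(z,z))\big)\land \SET(Y)$ retains the original family on the leaves, so the output equals $\mathbb{F}\sqcup\mathbb{T}$.

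For correctness I would argue two directions. Non-emptiness of the image on a laminar input follows by combining \cref{lem:partitionThinSets}, which supplies a thin partition, with \cref{lem:disjointRepSets}, which supplies disjoint representative sets and hence a genuine identifying colouring; \cref{lem:leaderFormula} then guarantees this colouring satisfies $\chi_{\mathrm{col}}$, so it is not filtered out. Conversely, I would show that any colouring passing $\chi_{\mathrm{col}}$ already forces the correct output: clauses (a) and (b) say precisely that $X\mapsto$ its realization is a bijection from $\mathcal{F}$ (equivalently, the nodes of $T$) onto the selected elements, and since $\descendantT$ is defined purely through the colouring-independent subset relation, the resulting $\{\descendantT,\SET\}$-structure is, up to isomorphism fixing $U$ pointwise, exactly $\mathbb{F}\sqcup\mathbb{T}$ no matter which good colouring was guessed. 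I expect the main obstacle to be exactly this last point: verifying that \emph{every} colouring accepted by $\chi_{\mathrm{col}}$ — not only the canonical identifying colouring produced by the lemmas — realizes every node exactly once and introduces none spuriously, so that the image is the single structure $\mathbb{F}\sqcup\mathbb{T}$ rather than a proper over- or under-approximation of $T$.
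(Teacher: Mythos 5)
Your proposal is correct and follows essentially the same route as the paper's proof: guess the $32$ colours, filter by the two leader-correctness conditions from \cref{lem:leaderFormula}, copy to get $17$ instances, and interpret realizations with descendancy given by set inclusion, where existence of an accepted colouring comes from \cref{lem:partitionThinSets} and \cref{lem:disjointRepSets} and correctness of \emph{every} accepted colouring comes directly from the filtered conditions. Your explicit $\psi_{\SET}$ clause and your closing worry about non-canonical colourings are both fine — the paper's own correctness argument resolves the latter exactly as you sketch, by observing that the filter alone forces the realization map to be a bijection from $\mathcal{F}$ onto the output universe with $\subseteq$ becoming the descendant relation.
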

\addtocounter{theorem}{-1}
\endgroup

\begin{proof}[Proof of \cref{thm:main}]
    Let $(U,\mathcal{F})$ be a laminar set system and $\mathbb{F}$ be the $\{\SET\}$-structure representing $(U,\mathcal{F})$.
    Our \MSO-transduction is obtained by composing the following atomic \MSO-transductions.
    To define the transduction, we use the formula constructed in \cref{lem:leaderFormula}. 
    \begin{enumerate}
        \item We first  apply a $32$-colouring transduction to structure $\mathbb{F}$ and obtain unary predicates $\colorV_1,\dots, \colorV_{32}$.
        \item We use filtering to ensure that the colouring $\colorV_1,\dots, \colorV_{32}$ is essentially identifying leaders correctly. To achieve this, we need to check that for every set $X\in \mathcal{F}$ there is precisely one $i\in \{0,\dots,16\}$ and one $r\in U$ for which the predicate $\leader_i(r,X)$ is satisfied. Furthermore, we need to ensure that for every $i\in \{0,\dots,16\}$ and any two distinct sets $X,X'\in \mathcal{F}$ for which there are elements $r,r'$ such that $\leader_i(r,X)$ and $\leader_i(r',X')$ are satisfied, it holds that $r\not=r'$. Note that we can express both these properties through an \MSO sentence $\chi$. By \cref{lem:partitionThinSets} and \cref{lem:disjointRepSets} there must always exist an identifying colouring while \cref{lem:leaderFormula} guarantees that if $\colorV_1,\dots, \colorV_{32}$ is an identifying colouring then the properties we filter by must be satisfied. Hence, for any laminar set system $\mathbb{F}$ we must receive at least one output.
        \item Copy the resulting structure $16$ times (resulting in $17$ instances of each element of $U$),
        thus introducing $16$ additional binary relations~$(\copyV_i)_{i\in \{0,\dots,16\}}$
        where $\copyV_i(x, y)$ for $i>0$ indicates that~$x$ is the $i$-th copy
        of the original element~$y$ and $\copyV_0(x, y)$ indicates that~$x$ is the original element $y$. Note that the original elements will represent the leaves of the laminar tree while the $i$-th copy of the leader of a set $X\in \mathcal{F}$ will play the role of the node $t$ corresponding to $X$ in case the appropriate thin set $S^i$ contains $t$.
        \item Finally, we interpret using a pair of formulas $(\phi,\psi_\descendant)$ to obtain a $\{\descendant\}$ structure $\mathbb{T}$. For this we define the \emph{realization} of a set $X\in \mathcal{F}$ to be the $i$-th copy of the element $r\in U$ for which $\leader_i(r,X)$ is satisfied. We define predicate $\realization(x,X)$ expressing that $x$ is the relization of $X$ as follows:
        \begin{align*}
            \realization(x,X):= \exists r\Big(\bigvee_{i\in \{0,\dots,16\}}\big(\leader_i(r,X)\land \copyV_i(x,r)\big)\Big).
        \end{align*}
        Now $\phi$ restrict the universe to elements which are the realization of some set in $\mathcal{F}$. We define the descendant relationship of $\mathbb{T}$ as follows. For two elements $x,x'$ of the restricted universe, $\psi_\descendant$ defines that $x$ is a descendant of $x'$ if they are the realization of two sets $X,X'\in \mathcal{F}$ and $X\subseteq X'$. Formally, let
        \begin{align*}
            \psi_\descendantT(x,x'):=\exists X \exists X' \Big(\realization(x,X)\land \realization(x',X') \land X\subseteq X'\Big).
        \end{align*}
        \end{enumerate}
        We are left to argue that the resulting structure $\mathbb{T}$ represents the laminar tree $T$ of $(U,\mathcal{S})$.  For this we use the properties we ensured through filtering. First, we define a partition $(P^0,\dots, P^{16})$  of $\mathcal{F}$   as follows. Set $X\in \mathcal{F}$ to be contained in part $P^i$ if and only if there is $r\in U$ for which $\leader_i(r,X)$ is satisfied. Note that this is a partition because for each $X$ there is precisely one $i\in \{0,\dots,16\}$ and one  $r\in U$ for which this is the case (ensured by filtering). Because of filtering we know that for each $i\in \{0,\dots,16\}$ the leaders of sets in $P^i$ are distinct. Our interpretation ensures that the $i$-th copy of the leader of set $X\in P^i$ is its realization and therefore plays the role of the node of $T$ corresponding to $X$.  This is ensured by the realization  of $X\in P^i$ being the descendant of the realization of $X'\in P^j$ exactly when $X\subseteq X'$.
        Finally, we note that for each set in $\mathcal{F}$ there is precisely one element of $\mathbb{T}$ which realizes it in the structure $\mathbb{T}$ and hence $\mathbb{T}$ represents the laminar tree $T$.
\end{proof}

\subsection{Beyond laminar set systems}\label{sec:beyondLaminar}

In this section we shortly outline how to adjust the proofs of \cite{CampbellGKKK25} to obtain \cref{cor:partitive}, \cref{cor:biPartitive} and \cref{cor:graphDecompositions}. 

We first note that the transductions used to obtain \cite[Theorem 1]{CampbellGKKK25} uses the \CMSO-transduction $\tau$ which given a laminar set system outputs its laminar tree \cite[Theorem 2]{CampbellGKKK25} as a black box. 
As stated in \cite{CampbellGKKK25}, the other steps of the transductions in \cite[Theorem 1]{CampbellGKKK25} are \MSO-transductions. Roughly these transductions are of the form:
\begin{enumerate}
    \item An \MSO-interpretation that adds a predicate $\SET$ to recognize sets corresponding to a side of a separations (for bipartitions we take the side not containing a fixed element).
    \item An \MSO-interpretation that adds a predicate $\SET!$ that recognizes the sets that do not cross any other.
    \item As these sets are laminar, we can apply $\tau$ to transduce tree-structure.
    \item An \MSO-interpretation that adds the information to the tree-structure that is necessary to recover the original object.
\end{enumerate}

Thus, by using our new \MSO-transduction (\cref{thm:main}) within the transductions given in \cite{CampbellGKKK25} directly yields \cref{cor:partitive}, \cref{cor:biPartitive} and \cref{cor:graphDecompositions} with the exception of the transduction outputting the bi-join decomposition of a graph.
Although this transduction does not rely on counting aside from obtaining the laminar tree, it uses the representative predicate $\rep_A(x,X)$ which is defined within the transduction producing the laminar tree in a different way to the predicate given in this paper. The predicate is required for the following reason. 

A bi-join of a graph $G$ is a bipartition $\{X,Y\}$ of $G$ for which there exists subsets $X'\subseteq X$ and $Y'\subseteq Y$ such that $X'$ is complete to $Y'$, $X\setminus X'$ is complete to $Y\setminus Y'$ and there are no further edges between $X$ and $Y$. In the bi-join decomposition, an auxiliary vertex is introduced for each of the sets $X', X\setminus X', Y'$ and $Y\setminus Y'$. However, structurally the bi-join is completely symmetric and therefore the representative of the node representing a particular bi-join in the decomposition is used to identify which of the sets $X', X\setminus X', Y',Y\setminus Y'$ is associated with which of the auxiliary vertices. To conclude, we can easily use our predicate $\leader_i(r,X)$ instead for this purpose.

\section{Simulating counting quantifiers in laminar set system}\label{sec:simulatingCounting}

In this section, we consider for which classes of laminar set systems the parity of a set is expressible over \MSO.
By Theorem~\ref{thm:main} and Lemma~\ref{lem:backwardsDirection}, we have transductions between laminar set systems and laminar trees, so by the Backwards Translation Theorem (Theorem~\ref{thm:backwards_translation}), it is equivalent to ask when we can \MSO-express the parity of a set of leaves.
Using the technique of Thatcher and Wright~\cite{ThatcherWright}, we show that if we have a rooted tree with bounded down-degree, then counting is definable over \MSO.
\begin{theorem}\label{thm:simulateCounting}
    Let $k$ be a positive integer.
    Let $\mathcal{T}_k$ be the collection of rooted trees with down-degree at most $k$.
    Then there is a unary set-predicate $\op{EVEN-LEAF}_k$, that is $\{\descendantT\}$-definable over \MSO, where for any rooted tree $T\in\mathcal{T}_k$ and any subset of the leaves $X$, we have $\op{EVEN-LEAF}_k(X)$ as true if $|X|$ is even.
\end{theorem}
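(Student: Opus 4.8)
The plan is to reduce counting the parity of a set of leaves $X$ to a statement about the existence of a consistent ``parity-colouring'' of the tree, which can then be expressed in \MSO via existential set quantification. The key observation, in the spirit of Thatcher and Wright, is that parity is a regular property computable by a bottom-up tree automaton, and for trees of bounded down-degree such an automaton has a bounded state set, so its run can be guessed and verified in \MSO. I would first set up the automaton informally: to each node $t$ of $T$ assign a bit $b(t)\in\{0,1\}$ equal to the parity of $|X\cap L(T_t)|$, where $L(T_t)$ denotes the leaves of the subtree rooted at $t$. At a leaf, $b(t)=1$ if $t\in X$ and $0$ otherwise; at an inner node, $b(t)$ is the exclusive-or (sum modulo $2$) of the bits of its children. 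Then $|X|$ is even precisely when $b(r)=0$ for the root $r$.

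To express this in \MSO, I would guess the set $Y\subseteq V(T)$ of nodes $t$ with $b(t)=1$ using a single existential set quantifier, and then write a formula $\mathrm{consistent}(X,Y)$ asserting that $Y$ is the correct parity-assignment. The leaf condition is straightforward: for every leaf $t$, we require $t\in Y \leftrightarrow t\in X$, where membership of $t$ in $X$ is read off the free set variable and ``$t$ is a leaf'' is \MSO-definable from $\descendantT$. The inner-node condition is where the bounded degree is essential: for an inner node $t$ with children $c_1,\dots,c_m$ (where $m\le k$), the bit $b(t)$ is determined by how many of the $c_i$ lie in $Y$, taken modulo $2$. Since $m\le k$ is bounded, I can write a finite disjunction over the possible child-patterns. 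Concretely, I would use a sub-formula quantifying over the (at most $k$) children of $t$ and assert that the number of children in $Y$ has the same parity as the indicator of $t\in Y$; because the number of children is bounded by $k$, counting ``how many children are in $Y$ modulo $2$'' becomes a bounded first-order statement, e.g.\ a disjunction over the finitely many even/odd splits, each expressed by naming distinct child variables. Finally $\op{EVEN-LEAF}_k(X)$ is defined as $\exists Y\,\big(\mathrm{consistent}(X,Y)\wedge \exists r(\rootNode\text{-style formula for }r)\,\wedge\, r\notin Y\big)$, asserting that the root's parity bit is $0$.

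The main obstacle I anticipate is cleanly expressing the modulo-$2$ count of children-in-$Y$ purely in \MSO while genuinely exploiting the degree bound $k$, rather than accidentally writing something that would also work for unbounded degree (which the companion non-definability theorem forbids). The trick is that with down-degree at most $k$, the relation ``$t$ is an inner node and an odd number of its children lie in $Y$'' can be written by existentially naming a set of child-positions: one quantifies element variables $z_1,\dots,z_\ell$ for the children in $Y$ and universally closes to say these are exactly the children in $Y$, ranging $\ell$ over the odd values in $\{1,\dots,k\}$ in a finite disjunction. Because $k$ is a fixed constant, this disjunction and the number of quantified variables are both bounded, keeping the formula a legitimate fixed \MSO-sentence; this is exactly the step that fails when degrees are unbounded, matching the boundary drawn by the subsequent non-definability result. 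A secondary, purely bookkeeping point is that $X$ is only guaranteed to be a set of leaves, so I should either restrict attention to $X\cap L$ or include the hypothesis that $X\subseteq L$ in the definability claim; the parity automaton is insensitive to non-leaf elements of $X$ only if we deliberately ignore them, so I would build the leaf restriction into the formula.
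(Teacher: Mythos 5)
Your proposal is correct and is essentially the paper's own proof: your guessed parity set $Y$ is exactly the paper's set $S$, your finite disjunction over child-patterns (naming at most $k$ distinct child variables) is precisely the paper's $\op{countkids}_\ell$/$\op{oddkids}$ construction, and your root-bit-zero condition together with the induction-on-height correctness argument matches the paper verbatim. The only cosmetic difference is your bookkeeping remark about restricting $X$ to leaves, which the paper handles by hypothesis in the theorem statement.
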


We first define some useful predicates in the vocabulary of rooted trees, which encode when a node is the root, when a node is a leaf, and when a node is a child of another. Let
\[
\rootT(x):=\forall y\left(\descendantT(x,y)\rightarrow y=x\right),
\]
\[
\leafT(x):=\forall y\left(\descendantT(y,x)\rightarrow y=x\right), \text{ and}
\]
\[
\childT(x,y):= \descendantT(x,y)\wedge\forall z\left(\descendantT(x,z)\wedge\descendantT(z,y)\rightarrow z=x\vee z=y\right).
\]

\begin{proof}
    We capture the parity of $|X|$ by considering it as a sum modulo two where we group terms according to $T$ and have a set $S$ keeping track of which partial sums are one mod two. We first require a way to express the sum of a bounded number of children.

    Given a fixed positive integer $\ell$, there is an $\MSO$-expression for a node $v$ having exactly $\ell$ children in a set $S$, namely:
    \[
    \op{countkids}_\ell(v,S):= \exists s_1\dots\exists s_\ell \left(\bigwedge_{i\neq j} s_i\neq s_j\right)\wedge\forall x \left(x\in S\wedge\op{child}(x,v)\longleftrightarrow\bigvee x=s_i\right).
    \]
    So given a fixed positive integer $k$, in a rooted tree with degree at most $k$ we have an \MSO-formula for a node $v$ having an odd number of children in a set $S$ by using the previous formula for all positive odd numbers less than $k$, namely:
    \[
    \op{oddkids}_k(v,S):= \bigvee_{\substack{\ell\text{ odd}\\\ell\leq k}}\op{countkids}_\ell(v,S).
    \]

    Now we express $|X|$ being even by saying
    there is a set of nodes $S$ where: a leaf of the tree is in $S$ if and only if it is in $X$, an internal node is in $S$ if and only if an odd number of its children are in $S$, and the root is not in $S$.
    Namely:
    \begin{align*}
    \op{EVEN-LEAF}_k(X):=&
        \exists S\big(
            \forall \ell \left[\op{leaf}(\ell)\rightarrow
                \left(\ell\in S\leftrightarrow\ell\in X\right)\right]\wedge\\
            &\qquad\forall v\left[\neg\op{leaf}(\ell)\rightarrow
                \left(v\in S\leftrightarrow\op{oddkids}_k(v,S)\right)\right]\wedge\\
            &\qquad\forall r\left[\op{root}(r)\rightarrow r\notin S\right]\big).
    \end{align*}
    The first two conditions certifies that $S$ is exactly the nodes of the tree with an odd number of descendants in $X$;
    by inducting on the height of a given node and observing that a sum is odd if and only if an odd number of its terms is odd.
    Thus, since all leaves are descendants of the root, the last condition indeed certifies that $X$ is even. 
\end{proof}
We remark that the proof above can easily be adapted to show \cref{thm:simulateCounting} for counting module $i$ for other $i$ than $2$.

In contrast,
we cannot define $\op{EVEN-LEAF}$ for arbitrarily large stars
where a \emph{star} is a tree consisting of a root vertex where all of its children are leaves.
This implies that if the class of rooted trees, $\cT$, is closed under subgraphs, then $\op{EVEN-LEAF}$ is \MSO-definable if and only if there is a universal bound on the degree of trees in $\cT$.
\begin{theorem}
    Let $\cT$ be a class of rooted trees.
    If $\cT$ contains arbitrarily large stars, then we cannot $\{\descendantT\}$-define $\op{EVEN-LEAF}$ for $\cT$ over \MSO.
\end{theorem}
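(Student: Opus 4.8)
The plan is to prove a non-definability result, so the natural tool is an Ehrenfeucht--Fra\"iss\'e (EF) argument tailored to \MSO. The statement to establish is that no single \MSO-formula $\op{EVEN-LEAF}(X)$ can correctly compute the parity of $|X|$ across all stars in $\cT$, given that $\cT$ contains arbitrarily large stars. First I would fix an arbitrary \MSO-formula $\phi(X)$ with one free set variable and let $q$ be its quantifier rank. The goal is to exhibit, for a sufficiently large star, two subsets $X$ of the leaves---one of even size and one of odd size---that $\phi$ cannot distinguish, contradicting the assumption that $\phi$ defines $\op{EVEN-LEAF}$.

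The key structural observation is that a star is extremely symmetric: its automorphism group acts as the full symmetric group on the leaves, fixing the root. This means that for a subset $X$ of the leaves, the only invariant that an \MSO-formula of quantifier rank $q$ can detect is, up to a threshold depending on $q$, the number of leaves in $X$ and the number of leaves not in $X$. The standard EF machinery for \MSO on sets of indistinguishable elements shows that the $q$-round \MSO game cannot distinguish $X$ from $X'$ whenever both $|X|$ and $|X'|$ exceed a threshold $t = t(q)$ and both $|X^c|$ and $|X'^c|$ exceed $t$ as well; more precisely, over a large star the duplicator wins by matching chosen leaves according to whether they lie in $X$ or its complement, using the abundance of leaves in each class to respond to any spoiler move. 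So I would take a star whose number of leaves $n$ is large enough that I can choose $X$ and $X'$ with $|X|$ even, $|X'|=|X|+1$ odd, and with $|X|, |X'|, n-|X|, n-|X'|$ all exceeding the threshold $t(q)$. This is possible precisely because the stars in $\cT$ have unbounded degree.

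Having set this up, the core lemma is that $\phi$ agrees on $X$ and $X'$: since the duplicator wins the $q$-round \MSO-EF game on the two structures $(\text{star}, X)$ and $(\text{star}, X')$, every \MSO-sentence of quantifier rank at most $q$ (in particular $\phi$, viewed with $X$ substituted) has the same truth value on both. But $|X|$ is even and $|X'|$ is odd, so if $\phi$ defined $\op{EVEN-LEAF}$ it would hold on $X$ and fail on $X'$---a contradiction. I would conclude that no such $\phi$ exists. I expect the main obstacle to be making the duplicator strategy fully rigorous: one must carefully handle the interplay of set quantifiers (the spoiler can color arbitrary subsets of the leaves) with element quantifiers, and verify that the threshold $t(q)$ behaves correctly under the colorings the spoiler introduces. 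The cleanest route is to invoke a known composition or locality result for \MSO over sets of indistinguishable elements---essentially the fact that \MSO over a pure set (no relations beyond equality) can only count up to a quantifier-rank-dependent threshold---which is exactly the classical result underlying the non-definability of parity in \MSO, adapted here to the leaf-set of a star with the root as a distinguished but symmetric-preserving element.
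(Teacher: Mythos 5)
Your proposal is correct in outline, but it takes a genuinely different route from the paper. The paper does not run any game argument on stars: it takes the classical fact that parity of the universe is not \MSO-definable over the \emph{empty} vocabulary (Libkin, Proposition~7.12) as a black box, builds a short overlay \MSO-transduction that turns a pure set $U$ into a star whose leaves are exactly $U$ (one copied element becomes the root, an interpretation adds $\descendantT$), and then applies the Backwards Translation Theorem: a hypothetical definition of $\op{EVEN-LEAF}$ on $\cT$ would pull back through this transduction to an \MSO-sentence expressing parity of a pure set, a contradiction. Your approach instead proves the indistinguishability directly: fix a formula $\phi(X)$ of quantifier rank $q$, and win the $q$-round \MSO Ehrenfeucht--Fra\"iss\'e game (with set moves) on $(T,X)$ versus $(T,X')$ for a large star $T$ and leaf-sets of different parity whose sizes and co-sizes all exceed a rank-dependent threshold. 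This works --- the root is definable and the leaves are fully interchangeable, so the star is essentially a pure set plus one distinguished point --- but the technical core you flag (maintaining, across set moves, that corresponding cells of the induced leaf-partition are equal in size or both large) is precisely the proof of the classical pure-set result, re-executed in the star setting. So the trade-off is: the paper's argument is a few lines because the transduction machinery and Backwards Translation Theorem are already set up and the classical result is never reopened, and it is robust to replacing stars by anything transducible from pure sets; your argument is self-contained, needs no transduction apparatus, and yields quantitative information (how large a star defeats rank $q$), at the cost of redoing the game-theoretic bookkeeping or carefully citing a composition result that does it for you. Both are valid proofs; yours re-proves the underlying fact, the paper transfers it.
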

\begin{proof}
    It is well-established that when our vocabulary is empty, then we cannot \MSO-express that the universe has even parity (see~\cite[Proposition~7.12]{Libkin}). We show that we can reduce our theorem to this fact.

    We show that we have an overlay \MSO-transduction $\tau$ from $\{\}$-structures to $\{\descendantT\}$-structure where for $\{\}$-structure $U$, the image $\tau(U)$ is a rooted tree with leaves $U$. Namely,
    \begin{enumerate}
        \item Apply a $1$-copying transduction to get binary predicates $\copyV_0$ and $\copyV_1$.
        \item Apply a $2$-colour transduction to get unary predicates $\colour_1$ and $\colour_2$.
        \item Filter to ensure that colour 1 consists of all original elements,
        and colour 2 consists of a single element $r$ of a copied element.
        \item Apply an interpretation transduction where $\phi$ is satisfied for vertices in $\colour_1$ or $\colour_2$ and 
        \[
        \psi_{\descendantT}(x,y):=\colour_1(x)\wedge\colour_2(y). 
        \]
    \end{enumerate}
    
    Suppose for contradiction that $\op{EVEN-LEAF}$ is $\{\ancestorT\}$-expressible for $\cT$ over \MSO.
    Then the sentence
    \[
    \forall X\left(
    \forall\ell\left[\leafT(\ell)\leftrightarrow\ell\in X \right]
    \rightarrow\op{EVEN-LEAF}(X)
    \right)
    \]
    expresses that the collection of all leaves is even.
    By the Backwards Translation Theorem (Theorem~\ref{thm:backwards_translation}) applied to our transduction $\tau$ above,
    this shows that there is a sentence \MSO-expressing that the universe has even parity, contradicting Proposition~7.12 from~\cite{Libkin}.
    Thus $\op{EVEN-LEAF}$ is indeed not $\{\descendantT\}$-expressible for $\cT$ over \MSO.
\end{proof}

\bibliography{bib}

\end{document}